\DeclareMathAlphabet{\pazocal}{OMS}{zplm}{m}{n}
\numberwithin{equation}{section}
\newcommand*\rel@kern[1]{\kern#1\dimexpr\macc@kerna}
\newcommand*\widebar[1]{%
  \begingroup
  \def\mathaccent##1##2{%
    \rel@kern{0.8}%
    \overline{\rel@kern{-0.8}\macc@nucleus\rel@kern{0.2}}%
    \rel@kern{-0.2}%
  }%
  \macc@depth\@ne
  \let\math@bgroup\@empty \let\math@egroup\macc@set@skewchar
  \mathsurround\z@ \frozen@everymath{\mathgroup\macc@group\relax}%
  \macc@set@skewchar\relax
  \let\mathaccentV\macc@nested@a
  \macc@nested@a\relax111{#1}%
  \endgroup
}
\newtheorem{theorem}{Theorem}[section]
\newtheorem{proposition}[theorem]{Proposition}
\newtheorem{lemma}[theorem]{Lemma}
\newtheorem{corollary}[theorem]{Corollary}
\theoremstyle{definition}
\newtheorem{remark}[theorem]{Remark}
\newcommand{\vol}{\operatorname{Vol}}
\newcommand{\area}{\operatorname{Area}}
\newcommand{\Dc}{\mathcal{D}}
\newcommand{\Dpc}{\pazocal{D}}
\newcommand{\Sc}{\pazocal{S}}
\newcommand{\Ec}{\pazocal{E}}
\newcommand{\Bc}{\mathcal{B}}
\newcommand{\Hc}{\mathcal{H}}
\newcommand {\E} {\mathbb{E}}
\newcommand {\M} {\pazocal{M}}
\newcommand {\R} {\mathbb{R}}
\newcommand {\Z} {\mathbb{Z}}
\newcommand {\Lc} {\mathcal{L}}
\newcommand {\Tb} {\mathbb{T}}
\newcommand {\var} {\operatorname{Var}}
\begin{document}

\title[Nodal deficiency]{Nodal deficiency of random spherical harmonics in presence of boundary}

\author{Valentina Cammarota\textsuperscript{1}}
\email{valentina.cammarota@uniroma1.it}
\address{\textsuperscript{1}Department of Statistics, Sapienza University of Rome}
\author{Domenico Marinucci\textsuperscript{2}}
\email{marinucc@mat.uniroma2.it}
\address{\textsuperscript{1}Department of Mathematics, Tor Vergata University of Rome}
\author{Igor Wigman\textsuperscript{3}}
\email{igor.wigman@kcl.ac.uk}
\address{\textsuperscript{3}Department of Mathematics, King's College London}
\dedicatory{Dedicated to the memory of Jean Bourgain}

\date{\today}

\begin{abstract}

We consider a random Gaussian model of Laplace eigenfunctions on the hemisphere satisfying the Dirichlet
boundary conditions along the equator. For this model we find a precise asymptotic law for the corresponding
zero density functions, in both short range (around the boundary) and long range (far away from the boundary)
regimes. As a corollary, we were able to find a logarithmic negative bias for the total nodal length of
this ensemble relatively to the rotation invariant model of random spherical harmonics.

Jean Bourgain's research, and his enthusiastic approach to the nodal geometry of Laplace eigenfunctions, has
made a crucial impact in the field and the current trends within.
His works on the spectral correlations ~\cite[Theorem 2.2]{KKW} and joint with
Bombieri ~\cite{BB} have opened a door for an active ongoing research on the nodal length
of functions defined on surfaces of arithmetic flavour, like the torus or the square.
Further, Bourgain's work ~\cite{Bourgain derandomization}
on toral Laplace eigenfunctions, also appealing to spectral correlations, allowed for inferring deterministic results
from their random Gaussian counterparts.

\end{abstract}
	
\maketitle

\section{Introduction}

\subsection{Nodal length of Laplace eigenfunctions}

The nodal line of a smooth function $f:\M\rightarrow\R$, defined on a smooth compact surface $\M$, with or without a boundary,
is its zero set $f^{-1}(0)$. If $f$ is non-singular, i.e. $f$ has no critical zeros, then its nodal line is a smooth
curve with no self-intersections. An important descriptor of $f$ is its {\em nodal length}, i.e. the length of $f^{-1}(0)$,
receiving much attention in the last couple of decades, in particular, concerning the nodal length of the eigenfunctions
of the Laplacian $\Delta$ on $\M$, in the high energy limit.

Let $(\phi_{j},\lambda_{j})_{j\ge 1}$ be the Laplace eigenfunctions on $\M$, with energies $\lambda_{j}$ in increasing order
counted with multiplicity, i.e.
\begin{equation}
\label{eq:Helmholts eq}
\Delta \phi_{j}+\lambda_{j}\phi_{j}=0,
\end{equation}
endowed with the Dirichlet boundary conditions $\phi|_{\partial \M}\equiv 0$ in presence of nontrivial boundary.
In this context Yau's conjecture asserts that the nodal length
$\Lc(\phi_{j})$ of $\phi_{j}$ is commensurable with $\sqrt{\lambda_{j}}$, in the sense that
$$c_{\M}\cdot \sqrt{\lambda_{j}}\le \Lc(\phi_{j})\le C_{\M}\cdot \sqrt{\lambda_{j}},$$ with some constants $C_{\M}>c_{\M}>0$. Yau's conjecture was resolved for $\M$ analytic ~\cite{Bruning,Bruning-Gromes,DF}, and, more recently, a lower bound ~\cite{Log Lower} and a polynomial upper bound
~\cite{Log Mal,Log Upper} were asserted in {\em full generality} (i.e., for $\M$ smooth).

\subsection{(Boundary-adapted) random wave model}

In his highly influential work ~\cite{Berry 1977} Berry proposed to compare the high-energy Laplace eigenfunctions on generic chaotic surfaces
and their nodal lines to random monochromatic waves and their nodal lines respectively.
The random monochromatic waves (also called Berry's ``Random Wave Model" or RWM) is a
centred isotropic Gaussian random field $u:\R^{2}\rightarrow\R$ prescribed uniquely by the covariance function
\begin{equation}
\label{eq:cov RWM def}
\E[u(x)\cdot u(y)] = J_{0}(\|x-y\|),
\end{equation}
with $x,y\in\R^{2}$ and $J_{0}(\cdot)$ the Bessel $J$ function.

Let
\begin{equation}
\label{eq:K1 density RWM def}
K_{1}^{u}(x)=\phi_{u(x)}(0)\cdot \E[\|\nabla u(x)\|\big| u(x)=0]
\end{equation}
be the zero density, also called the ``first intensity" function of $u$, with $\phi_{u(x)}$ the probability density function of the random variable
$u(x)$.
In this isotropic case, it is easy to directly evaluate
\begin{equation}
\label{eq:K1 dens isotr expl}
K_{1}^{u}(x) \equiv \frac{1}{2\sqrt{2}},
\end{equation}
and then appeal to the Kac-Rice formula, valid under the easily verified non-degeneracy conditions on the random field $u$,
to evaluate the expected nodal length $\Lc(u;R)$ of $u(\cdot)$ restricted to a radius-$R$ disc $\Bc(R)\subseteq\R^{2}$
to be precisely
\begin{equation}
\label{eq:KacRice isotr RWM}
\E[\Lc(u;R)] =\int\limits_{\Bc(R)}K_{1}^{u}(x)dx=
 \frac{1}{2\sqrt{2}}\cdot \area(\Bc(R)).
\end{equation}
Berry ~\cite{Berry 2002}
found that, as $R\rightarrow\infty$, the variance $\var(\Lc(u;R))$ satisfies the asymptotic law
\begin{equation}
\label{eq:log var RWM}
\var(\Lc(u;R)) = \frac{1}{256}\cdot R^{2}\log{R} + O(R^{2}),
\end{equation}
much smaller than the a priori heuristic prediction $\var(\Lc(u;R))\approx R^{3}$ made based on the natural scaling of the problem,
due to what is now known as ``Berry's cancellation" ~\cite{wig} of the leading non-oscillatory term of the $2$-point correlation function (also known as the ``second zero intensity").

\vspace{2mm}

Further, in the same work ~\cite{Berry 2002}, Berry studied the effect induced on the nodal length
of eigenfunctions satisfying the Dirichlet condition on a nontrivial boundary, both in its vicinity and far away from it.
With the (infinite) horizonal axis $\{(x_{1},x_{2}):\: x_{2}=0\}\subseteq \R^{2}$ serving as a model for the boundary, he introduced a Gaussian random field $v(x_{1},x_{2}):\R\times\R_{>0}\rightarrow\R$ of {\em boundary-adapted} (non-stationary) monochromatic random waves,
forced to vanish at $x_{2}=0$. Formally, $v(x_{1},x_{2})$ is the limit, as $J\rightarrow\infty$, of the superposition
\begin{equation*}
\frac{2}{\sqrt{J}}\sum\limits_{j=1}^{J}\sin(x_{2}\sin(\theta_{j}))\cdot \cos(x_{1}\cos(\theta_{j})+\phi_{j})
\end{equation*}
of $J$ plane waves of wavenumber $1$ forced to vanish at $x_{2}=0$.
Alternatively, $v$ is the centred Gaussian random field prescribed by the covariance function
\begin{equation}
\label{eq:cov BARWM def}
r_{v}(x,y):=\E[v(x)\cdot v(y)] = J_{0}(\|x-y\|)-J_{0}(\|x-\widetilde{y}\|),
\end{equation}
$x=(x_{1},x_{2})$, $y=(y_{1},y_{2})$, and $\widetilde{y}=(y_{1},-y_{2})$ is the mirror symmetry of $y$;
the law of $v$ is invariant w.r.t. horizontal shifts
\begin{equation}
\label{eq:v hor shift invar}
v(\cdot,\cdot)\mapsto v(a+\cdot,\cdot),
\end{equation}
$a\in\R$, but not the vertical shifts.

By comparing \eqref{eq:cov RWM def} to \eqref{eq:cov BARWM def}, we observe that,
far away from the boundary (i.e. $x_{2},y_{2}\rightarrow\infty$),
$r_{v}(x,y) \approx J_{0}(\|x-y\|)$, so that, in that range,
the (covariance of) boundary-adapted waves converge to the (covariance of) isotropic ones \eqref{eq:cov RWM def}, though the decay of the error term in this approximation is slow and of oscillatory nature.
Intuitively, it means that, at infinity, the boundary has a small impact on the random waves, though
it takes its toll on the {\em nodal bias}, as it was demonstrated by Berry, as follows.

\vspace{2mm}

Let $K_{1}^{v}(x)=K_{1}^{v}(x_{2})$ be the zero density of $v$, defined analogously to \eqref{eq:K1 density RWM def}, depending on
the height $x_{2}$ only, independent of $x_{1}$ by the inherent invariance \eqref{eq:v hor shift invar}. Berry
showed\footnote{Though a significant proportion of the details of the computation were omitted,
we validated Berry's assertions for ourselves.}
that, as $x_{2}\rightarrow 0$,
\begin{equation}
\label{eq:K1 Berry local}
K_{1}^{v}(x_{2}) \rightarrow \frac{1}{2\pi},
\end{equation}
and attributed this ``nodal deficiency" $\frac{1}{2\pi} < \frac{1}{2\sqrt{2}}$, relatively to \eqref{eq:K1 dens isotr expl},
to the a.s. orthogonality of the nodal lines touching the boundary ~\cite[Theorem 2.5]{Cheng}.

Further, as $x_{2}\rightarrow\infty$,
\begin{equation}
\label{eq:K1 Berry global}
K_{1}^{v}(x_{2}) = \frac{1}{2\sqrt{2}} \cdot \left( 1 + \frac{\cos(2x_{2}-\pi/4)}{\sqrt{\pi x_{2}}} -\frac{1}{32\pi x_{2}}+E(x_{2})  \right) ,
\end{equation}
with some prescribed error term\footnote{Here $E(x_{2})$ is of order $\frac{1}{|x_{2}|}$, so not smaller by magnitude than $\frac{1}{32\pi x_{2}}$, but of oscillatory nature, and will not contribute to the Kac-Rice integral along expanding domains, as neither the term $\frac{\cos(2x_{2}-\pi/4)}{\sqrt{\pi x_{2}}} $.} $E(\cdot)$. In this situation a natural choice for expanding domains are the rectangles $\Dc_{R}:=[-1,1]\times [0,R]$, $R\rightarrow\infty$ (say).
As an application of the Kac-Rice formula \eqref{eq:KacRice isotr RWM} in this case, it easily follows that
\begin{equation}
\label{eq:log neg imp}
\E[\Lc(v;\Dc_{R})]= \frac{1}{2\sqrt{2}}\cdot\area(\Dc_{R}) -\frac{1}{32\sqrt{2}\pi}\log{R} + O(1)
\end{equation}
i.e., a logarithmic ``nodal deficiency" relatively to \eqref{eq:KacRice isotr RWM}, impacted by the boundary infinitely many wave lengths away from it. The logarithmic fluctuations \eqref{eq:log var RWM} in the isotropic case $u$, {\em possibly} also holding for $v$, give rise to a hope to be able to detect the said, also logarithmic, negative boundary impact \eqref{eq:log neg imp} via a single sample of the nodal length, or, at least, very few ones.

\subsection{Random spherical harmonics}

The (unit) sphere $\M=\Sc^{2}$ is one of but few surfaces, where the solutions to the Helmholtz equation \eqref{eq:Helmholts eq}
admit an explicit solution. For a number $\ell\in \Z_{\ge 0}$, the space of solutions of \eqref{eq:Helmholts eq} with
$\lambda=\ell(\ell+1)$ is the $(2\ell+1)$-dimensional space of degree-$\ell$ spherical harmonics, and conversely, all solutions
to \eqref{eq:Helmholts eq} are spherical harmonics of some degree $\ell\ge 0$. Given $\ell\ge 0$, let
$\Ec_{\ell}:=\{\eta_{\ell,1},\ldots \eta_{\ell,2\ell+1}\}$
be any $L^{2}$-orthonormal basis of the space of spherical harmonics of degree $\ell$.
The random field
\begin{equation}
\label{eq:Ttild spher harm full}
\widetilde{T_{\ell}}(x) =
\sqrt{\frac{4 \pi}{2\ell+1}}\sum\limits_{k=1}^{2\ell+1} a_{k}\cdot \eta_{\ell,k}(x),
\end{equation}
with $a_{k}$ i.i.d. standard Gaussian
random variables, is the degree-$\ell$ random spherical harmonics.

The law of $\widetilde{T_{\ell}}$
is invariant w.r.t. the chosen orthonormal basis
$\Ec_{\ell}$, uniquely defined via the covariance function
\begin{equation}
\label{eq:covar RSH}
\E[\widetilde{T_{\ell}}(x)\cdot \widetilde{T_{\ell}}(y)] = P_{\ell}(\cos {d(x,y)}),
\end{equation}
with $P_{\ell}(\cdot)$ the Legendre polynomial of degree $\ell$, and $d(\cdot,\cdot)$ is the spherical distance between $x,y\in\Sc^{2}$.
The random fields $\{\widetilde{T_{\ell}}\}$ are the Fourier components in the $L^{2}$-expansion of {\em every} isotropic random field
~\cite{MP}, of interest, for instance, in cosmology and the study of Cosmic Microwave Background radiation (CMB).

\vspace{2mm}

Let $\Lc(\widetilde{T_{\ell}})$ be the total nodal length of $\widetilde{T_{\ell}}$, of high interest for various pure and applied disciplines, including the above. Berard
~\cite{Berard} evaluated the expected nodal length to be precisely
\begin{equation}
\label{eq:Berard exp}
\E[\Lc(\widetilde{T_{\ell}})] = \sqrt{2\pi}\cdot \sqrt{\ell(\ell+1)},
\end{equation}
and, as $\ell\rightarrow\infty$ its variance is asymptotic ~\cite{wig} to
\begin{equation}
\label{eq:var log spher}
\var(\Lc(\widetilde{T_{\ell}})) \sim \frac{1}{32}\log{\ell},
\end{equation}
in accordance with Berry's \eqref{eq:log var RWM}, save for the scaling, and the invariance of the nodal lines w.r.t. the symmetry $x\mapsto -x$
of the sphere, resulting in a doubled leading constant in \eqref{eq:var log spher} relatively to \eqref{eq:log var RWM} suitably scaled.
A more recent proof ~\cite{MRWHP} of the Central Limit Theorem for $\Lc(\widetilde{T_{\ell}})$, asserting the asymptotic Gaussianity of
$$ \frac{\Lc(\widetilde{T_{\ell}}) - \E[\Lc(\widetilde{T_{\ell}})]}{\sqrt{\frac{1}{32}\log{\ell}}},$$ is sufficiently robust to also yield the Central Limit Theorem, as $R\rightarrow\infty$
for the nodal length $\Lc(u;R)$ of Berry's random waves, as it was recently demonstrated ~\cite{Vidotto}, also claimed by ~\cite{NPR}.

\subsection{Principal results: nodal bias for the hemisphere, at the boundary, and far away}

Our principal results concern the hemisphere $\Hc^{2}\subseteq \Sc^{2}$, endowed with the {\em Dirichlet} boundary conditions along the equator.
We will widely use the spherical coordinates $$\Hc^{2}=\{(\theta,\phi):\: \theta\in [0,\pi/2],\,\phi\in [0,2\pi)\},$$ with the equator identified
with $\{\theta=\pi/2\}\subseteq \Hc^{2}$. Here all the Laplace eigenfunctions are necessarily spherical harmonics restricted to $\Hc^{2}$, subject to some extra properties. Recall that a concrete (complex-valued) orthonormal basis of degree $\ell$ are the Laplace spherical harmonics
$\{Y_{\ell,m}\}_{m=-\ell}^{\ell}$, given in the spherical coordinates by
$$Y_{\ell,m}(\theta,\phi) = e^{im\phi}\cdot P_{\ell}^{m}(\cos{\theta}),$$ with $P_{\ell}^{m}(\cdot)$
the associated Legendre polynomials of degree $\ell$ on order $m$.
For $\ell\ge 0$, $|m|\le \ell$ the spherical harmonic $Y_{\ell,m}$ obeys the Dirichlet boundary condition on the equator, if and only if
$m\not\equiv\ell\mod{2}$, spanning a subspace of dimension $\ell$ inside the $(2\ell+1)$-dimensional space of spherical harmonics of degree $\ell$
~\cite[Example 4]{HT}.
(Its $(\ell+1)$-dimensional orthogonal complement is the subspace satisfying the Neumann boundary condition.) Conversely, every Laplace eigenfunction on $\Hc^{2}$ is necessarily a spherical harmonic of some degree $\ell\ge 0$ that is a linear combination of $Y_{\ell,m}$ with $m\not\equiv\ell\mod{2}$.

\vspace{2mm}

The principal results of this paper concern the following model of {\em boundary-adapted} random spherical
harmonics
\begin{equation}
\label{eq:BARSH def}
T_{\ell}(x) = \sqrt{ \frac{8 \pi }{2\ell+1} }\sum\limits_{\substack{m=-\ell\\m\not\equiv\ell\mod{2}}}^{\ell}a_{\ell,m}Y_{\ell,m}(x),
\end{equation}
where the $a_{\ell,m}$ are the standard (complex-valued) Gaussian random variables subject to the constraint $a_{\ell,-m}=\overline{a_{\ell,m}}$,
so that $T_{\ell}(\cdot)$ is real-valued. Our immediate concern is for the law of $T_{\ell}$, which, as for any centred Gaussian random field, is uniquely determined by its covariance function, claimed by the following proposition.

\begin{proposition} \label{16:52}
The covariance function of $T_{\ell}$ as in \eqref{eq:BARSH def} is given by
\begin{equation}
\label{eq:covar BARSH}
r_{\ell}(x,y):= \E[T_{\ell}(x)\cdot T_{\ell}(y)] = P_{\ell}(\cos d(x,y)) - P_{\ell}(\cos d(x,\overline{y})),
\end{equation}
where $\overline{y}$ is the mirror symmetry of $y$ around the equator, i.e. $y=(\theta,\phi)\mapsto \overline{y} = (\pi-\theta,\phi)$
in the spherical coordinates.
\end{proposition}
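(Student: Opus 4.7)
My plan is to identify $T_\ell$ with (a rescaling of) the odd part of the full random spherical harmonic $\widetilde{T_\ell}$ under the equatorial reflection $y\mapsto\overline{y}$, and then to read off the claimed covariance from the known covariance \eqref{eq:covar RSH} of $\widetilde{T_\ell}$.

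To execute the first step, I would use the reflection identity
\begin{equation*}
Y_{\ell,m}(\overline{y}) = (-1)^{\ell-m}Y_{\ell,m}(y),
\end{equation*}
which follows, in the paper's convention $Y_{\ell,m}(\theta,\phi)=e^{im\phi}P_\ell^m(\cos\theta)$ together with $\overline{y}=(\pi-\theta,\phi)$, from the standard parity relation $P_\ell^m(-t)=(-1)^{\ell-m}P_\ell^m(t)$ for the associated Legendre polynomials. Consequently $Y_{\ell,m}(x)-Y_{\ell,m}(\overline{x})$ vanishes for $m\equiv\ell\pmod 2$ and equals $2Y_{\ell,m}(x)$ otherwise. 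Comparing \eqref{eq:Ttild spher harm full} with \eqref{eq:BARSH def}, and using the numerical relation $\sqrt{8\pi/(2\ell+1)} = \sqrt{2}\cdot\sqrt{4\pi/(2\ell+1)}$, I obtain the pointwise identity
\begin{equation*}
T_\ell(x) = \tfrac{1}{\sqrt{2}}\bigl(\widetilde{T_\ell}(x) - \widetilde{T_\ell}(\overline{x})\bigr),
\end{equation*}
valid in the same probability space once the Gaussian coefficients $\{a_{\ell,m}\}_{m\not\equiv\ell\,(2)}$ are shared across \eqref{eq:Ttild spher harm full} and \eqref{eq:BARSH def}.

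For the second step, bilinearity of the covariance, together with \eqref{eq:covar RSH} and the fact that $y\mapsto\overline{y}$ is an isometry of $\Sc^{2}$ (so $d(\overline{x},\overline{y})=d(x,y)$ and $d(\overline{x},y)=d(x,\overline{y})$), collapses the four resulting terms into $P_\ell(\cos d(x,y))-P_\ell(\cos d(x,\overline{y}))$, as claimed. The only genuine computation is the parity relation for $P_\ell^{m}$, which is standard; the conceptually essential point is that this very same relation controls whether $Y_{\ell,m}(\pi/2,\phi)$ vanishes (equivalently, whether $Y_{\ell,m}$ satisfies the Dirichlet condition on the equator), so the identification of $T_\ell$ with the antisymmetric part of $\widetilde{T_\ell}$ is not merely a bookkeeping trick but the precise analytic manifestation of the boundary condition imposed in \eqref{eq:BARSH def}. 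Given this, the only place where sign care is really required is in verifying the reflection identity under the paper's chosen normalization of $P_\ell^{m}$; everything else is a short, essentially algebraic, manipulation.
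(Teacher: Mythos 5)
Your proof is correct and, despite the different packaging, rests on exactly the same two ingredients the paper uses: the parity relation $Y_{\ell,m}(\overline{y})=(-1)^{\ell+m}Y_{\ell,m}(y)$ (which the paper encodes via the indicator $\tfrac12\bigl(1+(-1)^{m+\ell+1}\bigr)$ inserted in the restricted sum, while you encode it by noting $Y_{\ell,m}-Y_{\ell,m}\circ\mathrm{refl}$ vanishes for $m\equiv\ell$ and doubles otherwise) and the addition theorem (which the paper applies directly, while you import it via the known covariance \eqref{eq:covar RSH} of $\widetilde{T_\ell}$). The realization $T_\ell=\tfrac{1}{\sqrt2}\bigl(\widetilde{T_\ell}-\widetilde{T_\ell}\circ\mathrm{refl}\bigr)$ is a cleaner conceptual framing at the level of the fields rather than the coefficients, but the underlying computation is the same as the paper's.
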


It is evident, either from the definition or the covariance, that the law of $T_{\ell}$ is invariant w.r.t. rotations of $\Hc^{2}$ around
the axis orthogonal to the equator, that is, in the spherical coordinates,
\begin{equation}
\label{eq:Tl rot z axis}
T_{\ell}(\theta,\phi)\mapsto T_{\ell}(\theta,\phi+\phi_{0}),
\end{equation}
$\phi\in [0,2\pi)$.
The boundary impact of \eqref{eq:covar BARSH} relatively to \eqref{eq:covar RSH} is in perfect harmony with the boundary impact of the covariance
\eqref{eq:cov BARWM def} of Berry's boundary-adapted model relatively to the isotropic case \eqref{eq:cov RWM def}, except that the mirror symmetry
$y\mapsto \widetilde{y}$ relatively to the $x$ axis in the Euclidean situation is substituted by mirror symmetry $y\mapsto\overline{y}$ relatively to the equator for the spherical geometry. These generalize to $2$ dimensions the boundary impact on the ensemble of stationary random trigonometric polynomials on the circle ~\cite{Qualls,GW} resulting in the ensemble of non-stationary random trigonometric polynomials vanishing at the endpoints ~\cite{Dunnage,ADL}.

Let
\begin{align} \label{K1}
K_{1,\ell}(x) = \frac{1}{\sqrt{2\pi}\cdot \sqrt{\var(T_{\ell}(x))}}\E\big[\|\nabla T_{\ell}(x)\|\big| T_{\ell}(x)=0\big],
\end{align}
be the zero density of $T_{\ell}$, that, unlike the rotation invariant the spherical harmonics \eqref{eq:Ttild spher harm full}, genuinely depends on $x\in\Hc$.
More precisely, by the said invariance w.r.t. \eqref{eq:Tl rot z axis}, the zero density $K_{1,\ell}(x)$ depends on the polar angle $\theta$ only.
We rescale by introducing the variable
\begin{equation}
\label{eq:psi rescale}
\psi = \ell (\pi - 2 \theta),
\end{equation}
and, with a slight abuse of notation, write
$$K_{1,\ell}(\psi)=K_{1,\ell}(x).$$
Our principal result deals with the asymptotics of $K_{1,\ell}(\cdot)$, in two different regimes,
in line with \eqref{eq:K1 Berry local} and \eqref{eq:K1 Berry global} respectively.

\begin{theorem}
\label{thm:main asympt}

\begin{enumerate}

\item For $C>0$ sufficiently large, as $\ell \to \infty$, one has
\begin{align}
\label{eq:nod bias hemi far from boundary}
K_{1,\ell}(\psi)&= \frac{\sqrt{\ell(\ell+1)}}{2 \sqrt 2} \left[  1 +  \sqrt{\frac 2 \pi} \frac{1}{\sqrt \psi} \cos\{(\ell+1/2)\psi/\ell-\pi/4\} -\frac{1}{16 \pi \psi} \right. \\
& \;\; \left.+ \frac{15}{16 \pi \psi}  \cos\{(\ell+1/2)2\psi/\ell-\pi/2\}\right]  +O(\psi^{-3/2} \ell^{-2} ), \nonumber
\end{align}
uniformly for $C < \psi < \pi  \ell$, with the constant involved in the $`O'$-notation absolute.

\vspace{0.5cm}

\item For $\ell\ge 1$ one has the uniform asymptotics
\begin{align}
\label{eq:nod bias hemi close to boundary}
K_{1,\ell}(\psi)
= \frac{\ell}{2 \pi} \left[ 1 + O(\ell^{-1}) + O(\psi^2) \right],
\end{align}
with the constant involved in the $`O'$-notation absolute.
\end{enumerate}

\end{theorem}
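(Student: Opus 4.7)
The plan is to evaluate $K_{1,\ell}$ directly via the Kac--Rice formula, using the covariance from Proposition \ref{16:52} together with the two standard asymptotic regimes for Legendre polynomials: Szeg\H{o}'s oscillatory expansion when the argument of $P_\ell$ is bounded away from $\pm 1$, and the Mehler--Heine Bessel approximation when it is close to $1$. By the rotational invariance \eqref{eq:Tl rot z axis} it suffices to fix $\phi = 0$, and the reflection $\phi \mapsto -\phi$ kills the cross-covariances involving $\partial_\phi T_\ell$. Differentiating \eqref{eq:covar BARSH} at $y = x$ in spherical coordinates (using $\psi/\ell = \pi - 2\theta$) then yields, after a short computation,
\begin{align*}
A &:= \var T_\ell(x) = 1 - P_\ell(\cos(\psi/\ell)), \\
B &:= \E[T_\ell\,\partial_\theta T_\ell] = -\sin(\psi/\ell)\,P_\ell'(\cos(\psi/\ell)), \\
C &:= \E[(\partial_\theta T_\ell)^2] = \tfrac{\ell(\ell+1)}{2} - \sin^2(\psi/\ell)\,P_\ell''(\cos(\psi/\ell)) + \cos(\psi/\ell)\,P_\ell'(\cos(\psi/\ell)), \\
D &:= \E[(\partial_\phi T_\ell)^2]/\sin^2\theta = \tfrac{\ell(\ell+1)}{2} - P_\ell'(\cos(\psi/\ell)).
\end{align*}
Setting $\sigma_1^2 = C - B^2/A$ (the conditional $\theta$-variance given $T_\ell(x) = 0$) and $\sigma_2^2 = D$, the conditional gradient has diagonal covariance, and a polar-coordinate integration reduces Kac--Rice to
\begin{equation*}
K_{1,\ell}(\psi) = \frac{1}{\pi\sqrt{A}}\int_0^{\pi/2}\sqrt{\sigma_1^2\cos^2 t + \sigma_2^2 \sin^2 t}\,dt,
\end{equation*}
an elliptic-integral expression whose behaviour in each regime must now be extracted from the asymptotics of $P_\ell,P_\ell',P_\ell''$ at $\cos(\psi/\ell)$.

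For part (1), with $\psi$ bounded below by a sufficiently large absolute constant and above by $\pi\ell$, I would substitute Szeg\H{o}'s asymptotic $P_\ell(\cos\theta) = \sqrt{2/(\pi\ell\sin\theta)}\cos((\ell+\tfrac12)\theta - \tfrac{\pi}{4}) + O(\ell^{-3/2}\sin^{-3/2}\theta)$ at $\theta = \psi/\ell$, along with the analogous expansions of $\tfrac{d}{d\theta}P_\ell(\cos\theta)$ and $\tfrac{d^2}{d\theta^2}P_\ell(\cos\theta)$ (from which $P_\ell',P_\ell''$ are recovered). One obtains $A = 1 + O(\psi^{-1/2})$ and $\sigma_1^2, \sigma_2^2 = \tfrac{\ell(\ell+1)}{2}(1+O(\psi^{-1/2}))$, so the elliptic integrand is close to the isotropic value $\sqrt{\ell(\ell+1)/2}\cdot\pi/2$, giving the leading term $\sqrt{\ell(\ell+1)}/(2\sqrt{2})$ of \eqref{eq:nod bias hemi far from boundary}. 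The two explicit oscillatory corrections of orders $\psi^{-1/2}$ and $\psi^{-1}$ in \eqref{eq:nod bias hemi far from boundary} then come from retaining the first two subleading terms of the Szeg\H{o} expansion through the nonlinear operations $1/\sqrt{A}$ and the square-root integrand, while the remainder $O(\psi^{-3/2}\ell^{-2})$ tracks the first truly neglected Szeg\H{o} term.

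For part (2), with $\psi$ small, I would instead invoke Mehler--Heine, $P_\ell(\cos(\psi/\ell)) = J_0(\psi)(1 + O(\ell^{-2}))$, together with its differentiated forms $\sin(\psi/\ell)P_\ell'(\cos(\psi/\ell)) \sim \ell J_1(\psi)$ and $\sin^2(\psi/\ell)P_\ell''(\cos(\psi/\ell)) \sim \ell^2(-J_0(\psi) + 2J_1(\psi)/\psi)$, uniformly in $\psi$ as $\ell \to \infty$. The crucial point is a strong cancellation in $\sigma_1^2 = C - B^2/A$: power-series expansion in $\psi$ combined with these asymptotics gives $A = \psi^2/4 + O(\psi^4) + O(\ell^{-2})$, $\sigma_2^2 = \ell^2\psi^2/16 + O(\ell^2\psi^4)$, and $\sigma_1^2 = O(\ell^2\psi^4)$, so $\sigma_1/\sigma_2 = O(\psi) \to 0$; the elliptic integral therefore collapses to $\sigma_2 \cdot E(1) = \sigma_2$ up to a multiplicative $(1 + O(\psi^2))$ factor. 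Assembling,
\begin{equation*}
K_{1,\ell}(\psi) = \frac{\sigma_2}{\pi\sqrt{A}}\big(1 + O(\psi^2) + O(\ell^{-1})\big) = \frac{\ell\psi/4}{\pi\cdot\psi/2}\big(1 + O(\psi^2) + O(\ell^{-1})\big) = \frac{\ell}{2\pi}\big(1 + O(\psi^2) + O(\ell^{-1})\big),
\end{equation*}
which is \eqref{eq:nod bias hemi close to boundary}.

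The main technical obstacle is the cancellation in $\sigma_1^2 = C - B^2/A$ in part (2): without it one would recover the isotropic value $\sqrt{\ell(\ell+1)}/(2\sqrt{2})$ rather than $\ell/(2\pi)$, so its correct extraction is geometrically essential, being precisely the algebraic source of the Berry-type nodal deficiency \eqref{eq:K1 Berry local}. In part (1) the comparable difficulty is the precise bookkeeping required to maintain consistent $\psi^{-1/2}$-order corrections through $1/\sqrt{A}$ and the elliptic integrand in order to reproduce the exact coefficients $\sqrt{2/\pi}$ and $15/(16\pi)$ of \eqref{eq:nod bias hemi far from boundary}.
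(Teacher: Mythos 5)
Your proposal is correct and follows essentially the same route as the paper: reduce $K_{1,\ell}$ to a ratio involving the variance $A=1-P_\ell(\cos(\psi/\ell))$ and the conditional gradient covariance, then substitute Hilb/Szeg\H{o} asymptotics for $P_\ell,P'_\ell,P''_\ell$ in the range $C<\psi<\pi\ell$ (expanding the nonlinear expression to second order in the small quantities, as the paper does in Proposition \ref{prop:19:31}), and exploit the cancellation $\sigma_1^2=C-B^2/A=O(\ell^2\psi^4)$ together with $A\sim\psi^2/4$, $\sigma_2^2\sim\ell^2\psi^2/16$ in the regime $\psi\to 0$. The only cosmetic difference is your polar/elliptic-integral packaging of the Gaussian expectation, and a harmless imprecision in the Mehler--Heine error (the $\ell\mapsto\ell+1/2$ shift gives an $O(\psi^2/\ell)$ rather than $O(\ell^{-2})$ correction, which is absorbed by the claimed $O(\ell^{-1})$ term anyway).
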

Clearly, the statement \eqref{eq:nod bias hemi close to boundary} is asymptotic for $\psi$ small only, otherwise yielding the mere bound $K_{1,\ell}(\psi)=O(\ell)$, which is easy.
As a corollary to Theorem \ref{thm:main asympt}, one may evaluate the asymptotic law of the total expected nodal length of $T_{\ell}$,
and detect the negative logarithmic bias relatively to \eqref{eq:Berard exp}, in full accordance with Berry's \eqref{eq:log neg imp}.

\begin{corollary}  \label{cor}
As $\ell \to \infty$, the expected nodal length has the following asymptotics:
\begin{align*}
\E[\Lc({T_{\ell}})] =   2 \pi   \frac{\sqrt{\ell(\ell+1)}}{2 \sqrt 2} -  \frac{1}{32 \sqrt 2}  \log(\ell) + O(1).
\end{align*}
\end{corollary}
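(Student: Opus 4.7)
The plan is to combine the Kac--Rice formula with the two-regime asymptotics of Theorem \ref{thm:main asympt}. By the rotational invariance \eqref{eq:Tl rot z axis}, Kac--Rice gives
\begin{equation*}
\E[\Lc(T_\ell)] = \int_{\Hc^{2}} K_{1,\ell}(x)\, dx = 2\pi \int_{0}^{\pi/2} K_{1,\ell}(\theta)\,\sin\theta\, d\theta,
\end{equation*}
and after the substitution $\psi = \ell(\pi-2\theta)$ from \eqref{eq:psi rescale} (so that $d\theta = -d\psi/(2\ell)$ and $\sin\theta = \cos(\psi/(2\ell))$),
\begin{equation*}
\E[\Lc(T_\ell)] = \frac{\pi}{\ell}\int_{0}^{\pi\ell} K_{1,\ell}(\psi)\,\cos(\psi/(2\ell))\, d\psi.
\end{equation*}
I then split this $\psi$-integral at the absolute constant $C$ of Theorem \ref{thm:main asympt}(1). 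For $\psi\in[0,C]$, part (2) of the theorem gives the uniform bound $K_{1,\ell}(\psi)=O(\ell)$, so the corresponding slice contributes $O(\ell^{-1}\cdot\ell\cdot C)=O(1)$, absorbed into the stated error.

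For $\psi\in[C,\pi\ell]$ I insert \eqref{eq:nod bias hemi far from boundary} term by term. The leading constant $\frac{\sqrt{\ell(\ell+1)}}{2\sqrt 2}$ integrates exactly: extending the range back to $[0,\pi\ell]$ (at the cost of an $O(1)$),
\begin{equation*}
\frac{\pi}{\ell}\cdot \frac{\sqrt{\ell(\ell+1)}}{2\sqrt 2}\int_{0}^{\pi\ell}\cos(\psi/(2\ell))\, d\psi = 2\pi\cdot \frac{\sqrt{\ell(\ell+1)}}{2\sqrt 2},
\end{equation*}
which recovers Berard's main term \eqref{eq:Berard exp}. The non-oscillatory correction $-1/(16\pi\psi)$, combined with $\cos(\psi/(2\ell))=1+O(\psi^2/\ell^2)$ on $[C,\pi\ell]$, produces
\begin{equation*}
-\frac{\pi}{\ell}\cdot\frac{\sqrt{\ell(\ell+1)}}{2\sqrt 2}\cdot\frac{1}{16\pi}\int_{C}^{\pi\ell}\frac{d\psi}{\psi} + O(1) = -\frac{1}{32\sqrt 2}\log\ell + O(1),
\end{equation*}
where I used $\sqrt{\ell(\ell+1)}/\ell = 1+O(1/\ell)$. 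This is the asserted logarithmic deficit. The residual $O(\psi^{-3/2}\ell^{-2})$ integrates to $O(\ell^{-1})$.

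It remains to dispose of the two oscillatory contributions in \eqref{eq:nod bias hemi far from boundary}. After multiplication by $\cos(\psi/(2\ell))$ and a product-to-sum expansion, each reduces to a sum of integrals of the form $\int_{C}^{\pi\ell}\psi^{-s}\cos(\alpha\psi+\beta)\, d\psi$ with $s\in\{1/2,1\}$ and phase-derivative $|\alpha|$ bounded below by an absolute positive constant (for $\ell\ge 1$, $(\ell+1/2)/\ell\pm 1/(2\ell)\ge 1/2$ and $(2\ell+1)/\ell\pm 1/(2\ell)\ge 3/2$). A single integration by parts then gives a boundary term of size $O(1)$ plus a tail integrand of size $O(\psi^{-s-1})$, which is uniformly integrable on $[C,\infty)$; after multiplication by the prefactor $\frac{\pi}{\ell}\cdot\frac{\sqrt{\ell(\ell+1)}}{2\sqrt 2}=O(1)$ this contributes $O(1)$. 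Summing all pieces yields the claim.

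The main obstacle, and the only step requiring any care, is the oscillatory bookkeeping just sketched: one must check that the phases appearing in the asymptotic expansion of $K_{1,\ell}$ do not resonate with the spherical Jacobian $\cos(\psi/(2\ell))$ to produce a hidden logarithmic term. Since the combined phase derivatives are bounded away from zero by an absolute constant, a single integration by parts suffices; everything else is direct substitution.
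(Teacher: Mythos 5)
Your computation (leading term, log extraction from the $-1/(16\pi\psi)$ piece, oscillatory bounds via a single integration by parts with phase derivatives bounded away from zero) is correct and follows the same route as the paper; you even treat the second oscillatory term explicitly, which the paper leaves implicit. The genuine gap is in the opening line: you write the Kac--Rice identity $\E[\Lc(T_\ell)]=\int_{\Hc^2}K_{1,\ell}(x)\,dx$ without justification, and that identity is in fact false as stated. The Gaussian vector $(T_\ell(x),\nabla T_\ell(x))$ degenerates on the equator, where $\var(T_\ell(x))=1-P_\ell(\cos(\pi-2\theta))$ vanishes, so Kac--Rice does not apply directly on all of $\Hc^2$; moreover the equator is deterministically contained in the nodal set by the Dirichlet condition, contributing a length $2\pi$ that no first-intensity integral can register. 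The paper's Proposition \ref{KKRR} handles both issues by excising an $\varepsilon$-neighbourhood of the equator, applying Kac--Rice on the complement, and letting $\varepsilon\to 0$ via Monotone Convergence, arriving at $\E[\Lc(T_\ell)]=\int_{\Hc^2}K_{1,\ell}\,dx + 2\pi$. Since $2\pi=O(1)$ the corollary's final asymptotics are unchanged, so you land on the right answer, but the step you skipped is the one the paper devotes a full proposition to.

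A smaller discrepancy: you cover all of $[0,C]$ with a uniform $O(\ell)$ bound read off Theorem \ref{thm:main asympt}(2). Taken at face value that statement does imply $K_{1,\ell}(\psi)=O(\ell)$ for $\psi\le C$, but the paper's proof of part (2) is carried out only for $\psi$ below a small fixed $\epsilon_0$; the paper therefore treats the intermediate band $[\epsilon_0,C]$ with a separate argument (the variance $1-P_\ell(\cos(\psi/\ell))$ is bounded away from $0$ there, the conditional gradient second moments are $O(\ell^2)$ by \eqref{eq:Omega transition}, and Cauchy--Schwarz gives $K_{1,\ell}=O(\ell)$). Either route yields $O(1)$ for this band, so your conclusion stands, but you should be aware that your use of part (2) stretches it slightly beyond what the paper actually proves.
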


\vspace{2mm}

\subsection*{Acknowledgements}

We are grateful to Ze\'{e}v Rudnick for raising the question addressed within this manuscript.
V.C. has received funding from the Istituto Nazionale di Alta Matematica (INdAM)
through the GNAMPA Research Project 2020 ``Geometria stocastica e campi aleatori".
D.M. is supported by the MIUR Departments of Excellence Program
Math@Tov.

\section{Discussion}

\subsection{Toral eigenfunctions and spectral correlations}

Another surface admitting explicit solutions to the Helmholtz equation \eqref{eq:Helmholts eq} is the standard torus $\Tb^{2}=\R^{2}/\Z^{2}$. Here
the Laplace eigenfunctions with eigenvalue $4\pi^{2}n$ all correspond to an integer $n$ expressible as a sum of two squares, and are given by a sum
\begin{equation}
\label{eq:fn eig torus def}
f_{n}(x)= \sum\limits_{\|\mu\|^{2}=n} a_{\mu}e(\langle \mu,x\rangle)
\end{equation}
over all lattice points $\mu=(\mu_{1},\mu_{2})\in \Z^{2}$ lying on the radius-$\sqrt{n}$ centred circle,
$n$ is a sum of two squares with $e(y):=e^{2\pi iy}$,
$\langle\mu,x\rangle= \mu_{1}x_{1}+\mu_{2}x_{2}$, $x=(x_{1},x_{2})\in\Tb^{2}$. Following ~\cite{ORW}, one endows the eigenspace of $\{f_{n}\}$
with a Gaussian probability measure with the coefficients $a_{\mu}$ standard (complex-valued) i.i.d. Gaussian, save for $a_{-\mu}=\overline{a_{\mu}}$, resulting in the ensemble of ``arithmetic random waves".

\vspace{2mm}

The expected nodal length of $f_{n}$ was computed ~\cite{RW08} to be
\begin{equation}
\label{eq:RW nod length tor}
\E[\Lc(f_{n})] = \sqrt{2}\pi^{2}\cdot \sqrt{n},
\end{equation}
and the useful upper bound
\begin{equation*}
\var(\Lc(f_{n})) \ll \frac{n}{\sqrt{r_{2}(n)}}
\end{equation*}
was also asserted, with $r_{2}(n)$ the number of lattice points lying on the radius-$\sqrt{n}$ circle, or, equivalently, the dimension
of the eigenspace $\{f_{n}\}$ as in \eqref{eq:fn eig torus def}. A precise asymptotic law for $\var(\Lc(f_{n}))$ was subsequently established
~\cite{KKW}, shown to fluctuate, depending on the angular distribution of the lattice points. A non-central non-universal limit
theorem was asserted ~\cite{MRW}, also depending on the angular distribution of the lattice points.

An instrumental key input to both the said asymptotic variance and the limit law
was Bourgain's first nontrivial upper bound ~\cite[Theorem 2.2]{KKW} of $o_{r_{2}(n)\rightarrow\infty}\left(r_{2}(n)^{4}\right)$
for the number of length-$6$ {\em spectral correlations}, i.e. $6$-tuples of lattice points $\{\mu:\:\|\mu\|^{2}=n\}$ summing up to $0$.
Bourgain's bound was subsequently improved and generalized to higher order correlations ~\cite{BB}, in various degrees of generality,
conditionally or unconditionally. These results are still actively used within the subsequent and ongoing research, in particular,
~\cite{Bourgain derandomization} and its followers.

\subsection{Boundary impact}

It makes sense to compare the torus to the square with Dirichlet boundary, and test what kind of impact it would have relatively to
\eqref{eq:RW nod length tor} on the expected nodal length, as the ``boundary-adapted arithmetic random waves", that were addressed in ~\cite{CKW}.
It was concluded, building on Bourgain-Bombieri's ~\cite{BB}, and by appealing to a
different notion of spectral correlation,
namely, the spectral {\em semi-correlations}, that, even at the level of expectation, the total nodal bias is fluctuating from nodal deficiency (negative bias) to nodal surplus (positive bias),
depending on the angular distribution of the lattice points and its interaction with the direction of the square boundary, at least,
for generic energy levels.
A similar experiment conducted by Gnutzmann-Lois for cuboids of arbitrary dimensions, averaging for eigenfunctions admitting separation of variables belonging to different eigenspaces, revealed consistency with Berry's nodal deficiency ansatz
stemming from \eqref{eq:log neg imp}.

\vspace{2mm}

It would be useful to test whether different Gaussian random fields on the square would result in different limiting nodal bias around the boundary corresponding to \eqref{eq:nod bias hemi close to boundary}, that is likely to bring in a different notion of spectral correlation, not unlikely ``quasi-semi-correlation" ~\cite{BMW,KS}. Another question of interest is ``de-randomize" any of these results, i.e. infer the corresponding results on deterministic eigenfunctions following Bourgain ~\cite{Bourgain derandomization}. We leave all of these to be addressed elsewhere.

\section{Joint distribution of $(f_n(x),\nabla f_n(x))$}

In the analysis of $K_{1,\ell}(x)$ we naturally encounter the distribution of $T_{\ell}(x)$, determined by $${\rm Var}(T_{\ell}(x))=1-P_{\ell}(\cos d(x, \bar{x}));$$ and the distribution of $\nabla T_{\ell}(x)$ conditioned on $T_{\ell}(x)=0$, determined by its $2 \times 2$ covariance matrix
$${\bf \Omega}_{\ell}(x)=\mathbb{E}[\nabla T_{\ell}(x) \cdot \nabla^t T_{\ell}(x) | T_{\ell}(x)=0].$$
Let $x$ correspond to the spherical coordinates $(\theta,\phi)$. An explicit computation shows that the covariance matrix ${\bf \Omega}_{\ell}(x)$ depends only on $\theta$, and below we will often abuse notation to write ${\bf \Omega}_{\ell}(\theta)$ instead, and also, when convenient,
${\bf \Omega}_{\ell}(\psi)$ with $\psi$ as in \eqref{eq:psi rescale}. A direct computation shows that:
\begin{lemma} \label{13:48}
The $2 \times 2$ covariance matrix of $\nabla T_{\ell}(x)$ conditioned on $T_{\ell}(x)=0$ is the following real symmetric matrix
\begin{align}
\label{eq:Omega eval}
{\bf \Omega}_\ell(x)= \frac{\ell(\ell+1)}{2} \left[ {\bf I}_2 + {\bf S}_\ell(x) \right],
\end{align}
where
\begin{align*}
{\bf S}_\ell(x)&= \left( \begin{array}{cc}
S_{11,\ell}(x)& 0\\
0& S_{22,\ell}(x)
\end{array}\right),
\end{align*}
and for $x=(\theta,\phi)$
\begin{align*}
S_{11,\ell}(x)&=-\frac{2}{\ell (\ell+1)} \Big[\cos(2 \theta) \; P'_{\ell}(\cos(\pi-2 \theta))+ \sin^2(2 \theta) \; P''_{\ell}(\cos(\pi-2 \theta)) \\
&\;\;+  \frac{1}{1- P_{\ell}(\cos(\pi-2 \theta))}   \sin^2(2 \theta) \; [P'_{\ell}(\cos(\pi-2 \theta)) ]^2\Big],\\
S_{22,\ell}(x)&= -\frac{2}{\ell (\ell+1)}  P'_{\ell}(\cos(\pi-2 \theta)).
\end{align*}
\end{lemma}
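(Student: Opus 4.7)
The plan is to invoke the standard Gaussian regression formula: since $T_\ell(x)$ is centred Gaussian,
\begin{align*}
\mathbb{E}\bigl[\nabla T_\ell(x)\,\nabla^t T_\ell(x)\,\big|\,T_\ell(x)=0\bigr]
=\mathbb{E}[\nabla T_\ell(x)\,\nabla^t T_\ell(x)]
-\frac{\mathbb{E}[T_\ell(x)\nabla T_\ell(x)]\,\mathbb{E}[T_\ell(x)\nabla^t T_\ell(x)]}{\mathrm{Var}(T_\ell(x))}.
\end{align*}
The denominator $1-P_\ell(\cos(\pi-2\theta))$ is recorded immediately before the lemma, and the remaining two expectations are obtained by differentiating the kernel $r_\ell(x,y)=P_\ell(\cos d(x,y))-P_\ell(\cos d(x,\overline y))$ from Proposition~\ref{16:52} in one or both variables and restricting to the diagonal $y=x$. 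I would work throughout in the orthonormal moving frame $(\partial_\theta,(\sin\theta)^{-1}\partial_\phi)$, so that $\|\nabla T_\ell\|$ is the intrinsic gradient norm used downstream in Kac--Rice.

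Using
\begin{align*}
\cos d(x,y)&=\cos\theta_x\cos\theta_y+\sin\theta_x\sin\theta_y\cos(\phi_x-\phi_y),\\
\cos d(x,\overline y)&=-\cos\theta_x\cos\theta_y+\sin\theta_x\sin\theta_y\cos(\phi_x-\phi_y),
\end{align*}
one checks that every single partial of $\cos d(x,y)$ vanishes at $y=x$, whereas for the mirrored factor only the $\theta_y$-partial survives, $\partial_{\theta_y}\cos d(x,\overline y)|_{y=x}=\sin(2\theta)$. With $P'_\ell(1)=\ell(\ell+1)/2$ this yields at once
\begin{align*}
\mathbb{E}[T_\ell(x)\nabla T_\ell(x)]=\bigl(-\sin(2\theta)\,P'_\ell(\cos(\pi-2\theta)),\;0\bigr)^t,
\end{align*}
so the rank-one correction in the regression formula lands entirely in the $(1,1)$ entry.

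For the quadratic block I would record the mixed partials at $y=x$,
\begin{align*}
\partial_{\theta_x}\partial_{\theta_y}\cos d(x,y)\big|_{y=x}&=1, & \partial_{\theta_x}\partial_{\theta_y}\cos d(x,\overline y)\big|_{y=x}&=\cos(2\theta),\\
\partial_{\phi_x}\partial_{\phi_y}\cos d(x,y)\big|_{y=x}&=\sin^2\theta, & \partial_{\phi_x}\partial_{\phi_y}\cos d(x,\overline y)\big|_{y=x}&=\sin^2\theta,
\end{align*}
noting that $\partial_{\theta_x}\partial_{\phi_y}$ applied to either factor vanishes on the diagonal, and that the single surviving first-order factor $\partial_{\theta_x}\cos d(x,\overline y)|_{y=x}=\sin(2\theta)$ feeds a $P''_\ell$ term through the chain rule. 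Assembling, and dividing the azimuthal row/column by $\sin\theta$ (the passage to the orthonormal frame), the unconditional covariance of $\nabla T_\ell(x)$ comes out diagonal, with $(1,1)$-entry equal to $\ell(\ell+1)/2-\sin^2(2\theta)P''_\ell-\cos(2\theta)P'_\ell$ and $(2,2)$-entry equal to $\ell(\ell+1)/2-P'_\ell$, all $P'_\ell,P''_\ell$ evaluated at $\cos(\pi-2\theta)$.

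Substituting into the regression formula and pulling out $\ell(\ell+1)/2$ produces ${\bf\Omega}_\ell(x)=\tfrac{\ell(\ell+1)}{2}({\bf I}_2+{\bf S}_\ell(x))$ with $S_{11,\ell},S_{22,\ell}$ exactly as stated; the $(1-P_\ell(\cos(\pi-2\theta)))^{-1}\sin^2(2\theta)(P'_\ell)^2$ summand inside $S_{11,\ell}$ is precisely the rank-one correction. There is no deep obstacle here—the computation is bookkeeping—but two items deserve care: the azimuthal rescaling $1/\sin\theta$ that converts coordinate partials into the orthonormal frame, and the mirror sign flips in $\cos d(x,\overline y)$, which are precisely what break stationarity and are responsible both for the non-vanishing drift $\mathbb{E}[T_\ell\nabla T_\ell]$ and for the $\cos(2\theta)P'_\ell$ contribution inside $S_{11,\ell}$.
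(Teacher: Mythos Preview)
Your proposal is correct and follows essentially the same approach as the paper: compute the unconditional covariance matrix ${\bf \Sigma}_\ell(x)$ of $(T_\ell(x),\nabla T_\ell(x))$ by differentiating the kernel $r_\ell$ from Proposition~\ref{16:52} in the orthonormal frame $(\partial_\theta,(\sin\theta)^{-1}\partial_\phi)$, observe that the cross-term ${\bf B}_\ell(x)$ has only a $\theta$-component, and then apply the Gaussian regression formula ${\bf \Omega}_\ell={\bf C}_\ell-\mathrm{Var}(T_\ell)^{-1}{\bf B}_\ell^t{\bf B}_\ell$. Your intermediate values for $B_{\ell,1}$, $C_{\ell,11}$, $C_{\ell,22}$ and the use of $P'_\ell(1)=\ell(\ell+1)/2$ match the paper's computations line for line.
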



In the next two sections we prove Lemma \ref{13:48}, that is, we evaluate the $2 \times 2$ covariance matrix of $\nabla T_{\ell}(x)$ conditioned upon $T_{\ell}(x)=0$. First, in section \ref{uncond}, we evaluate the unconditional $3 \times 3$ covariance matrix ${\bf \Sigma}_{\ell}(x)$ of $(T_{\ell}(x), \nabla T_{\ell}(x))$ and then, in section \ref{con_cov_matrix}, we apply the standard procedure for conditioning multivariate Gaussian random variables.

\subsection{The unconditional covariance matrix} \label{uncond}

The covariance matrix of $$(T_{\ell}(x), \nabla T_{\ell}(x)),$$ which could be expressed as
 \begin{align*}
 {\bf \Sigma}_{\ell}(x)=\left(
\begin{array}{cc}
{\bf A}_{\ell}(x) & {\bf B}_{\ell}(x) \\
{\bf B}_{\ell}^t(x) & {\bf C}_{\ell}(x)
\end{array}
\right),
 \end{align*}
 where
 \begin{align*}
 {\bf A}_{\ell}(x)&={\rm Var}(T_{\ell}(x)),\\
 {\bf B}_{\ell}(x)&=\mathbb{E}[T_{\ell}(x)\cdot  \nabla_{y} T_{\ell}(y)] \big|_{x=y},\\
 {\bf C}_{\ell}(x)&=\mathbb{E}[ \nabla_x T_{\ell}(x) \otimes  \nabla_{y} T_{\ell}(y)] \big|_{x=y}.
 \end{align*}
The $1 \times 2$ matrix ${\bf B}_{\ell}(x)$ is
\begin{align*}
{\bf B}_{\ell}(x)=\left(
\begin{array}{cc}
B_{\ell,1}(x) &
B_{\ell,2}(x)
\end{array}
\right),
 \end{align*}
where ${\bf B}_{\ell}(x)$ depends only on $\theta$, and by an abuse of notation we write
\begin{align*}
B_{\ell,1}(x)&=\frac{\partial}{\partial \theta_{y} }r_\ell(x,y) \Big|_{x=y} = - \sin(2 \theta) \cdot P'_{\ell}(\cos(\pi-2 \theta)), \\
B_{\ell,2}(x)&=\frac{1}{\sin \theta_{y}} \cdot\frac{\partial}{\partial \phi_{y}} r_\ell(x,y) \Big|_{x=y} =0.
\end{align*}
The entries of the $2 \times 2$ matrix ${\bf C}_{\ell}(x)$ are
\begin{align*}
{\bf C}_{\ell}(x)=\left(
\begin{array}{cc}
C_{\ell,11}(x) & C_{\ell,12}(x) \\
C_{\ell,21}(x) & C_{\ell,22}(x)
\end{array}
\right),
 \end{align*}
where again recalling that $x=(\theta,\phi)$ we write
\begin{align*}
C_{\ell,11}(x)&=\frac{\partial}{\partial \theta_x} \frac{\partial}{\partial \theta_{y}}  r_\ell(x,y) \Big|_{x=y}\\
&=P'_{\ell}(1)- \cos(2 \theta) \; P'_{\ell}(\cos(\pi-2 \theta)) -  \sin^2(2 \theta) \; P''_{\ell}(\cos(\pi-2 \theta)),\\
C_{\ell,12}(x)&= C_{\ell,21}(x)= \frac{1}{\sin \theta_{y} }\frac{\partial}{\partial \phi_{y}}   \frac{\partial}{\partial \theta_x}    r_\ell (x,y) \Big|_{x=y} = 0,\\
C_{\ell,22}(x)&= \frac{1}{\sin \theta_{y}}\frac{\partial}{\partial \phi_{y} }  \frac{1}{\sin \theta_x}\frac{\partial}{\partial \phi_x}  r_\ell(x,y) \Big|_{x=y} = P'_{\ell}(1)-P'_{\ell}(\cos(\pi-2 \theta)).
 \end{align*}
 \subsection{Conditional covariance matrix} \label{con_cov_matrix}
The conditional covariance matrix of the Gaussian vector $(\nabla T_{\ell}(x)|T_{\ell}(x)=0)$ is given by the standard Gaussian transition formula:
\begin{align}
\label{eq:Omega transition}
{\bf \Omega}_\ell(x)={\bf C}_\ell(x) -  \frac{1}{\var(T_{\ell}(x))} {\bf B}^t_\ell(x) {\bf B}_\ell(x).
\end{align}
Again taking $x=(\theta,\phi)$ and observing that
\begin{align*}
 \frac{{\bf B}^t_\ell(x) {\bf B}_\ell(x) }{\var(T_{\ell}(x))} =  \frac{1}{1- P_{\ell}(\cos(\pi-2 \theta))}
\left(
\begin{array}{cc}
\sin^2(2 \theta) \cdot [P'_{\ell}(\cos(\pi-2 \theta)) ]^2 &0 \\
0 &0
\end{array}
\right),
\end{align*}
and $$P'_{\ell}(1)=\frac{\ell(\ell+1)}{2},$$ we have
\begin{align*}
{\bf \Omega}_\ell(x) &= \frac{\ell (\ell+1)}{2} {\bf I}_2 \\
&\;\;- \left(
\begin{array}{cc}
  \cos(2 \theta) \cdot P'_{\ell}(\cos(\pi-2 \theta)) + \sin^2(2 \theta)\cdot P''_{\ell}(\cos(\pi-2 \theta)) & 0 \\
0 & P'_{\ell}(\cos(\pi-2 \theta))
\end{array}%
\right) \\
&\;\; -  \frac{1}{1- P_{\ell}(\cos(\pi-2 \theta))}
\left(
\begin{array}{cc}
\sin^2(2 \theta) \cdot [P'_{\ell}(\cos(\pi-2 \theta)) ]^2 &0 \\
0 &0
\end{array}
\right),
\end{align*}
that is the statement of Lemma \ref{13:48}. \\

\vspace{0.5cm}

\section{Proof of Theorem \ref{thm:main asympt}(1): Perturbative analysis away from the boundary}

\subsection{Perturbative analysis}

The asymptotic analysis \eqref{eq:nod bias hemi far from boundary} is in two steps.
First, we evaluate the variance ${\rm Var}(T_{\ell}(x))$ and each entry in ${\bf S}_\ell(x)$ using the high degree asymptotics of the Legendre polynomials and its derivatives (Hilb's asymptotics). In the second step, performed within Proposition \ref{prop:19:31}, we exploit the
analyticity of the
Gaussian expectation \eqref{K1} as a function of the entries of the corresponding {\em non-singular} covariance matrix,
to Taylor expand $K_{1,\ell}(x)$ where both ${\rm Var}(T_{\ell}(x))-1$ and the entries of ${\bf S}_\ell(x)$ are assumed to be small.

\begin{lemma}[Hilb's asymptotics]
\label{hilb0}
\begin{equation*}
P_\ell(\cos \varphi)=\left( \frac{\varphi}{\sin \varphi}\right)^{1/2} J_0((\ell+1/2) \varphi)+\delta_\ell(\varphi),
\end{equation*}
uniformly for $0 \le  \varphi \le \pi-\varepsilon$, where $J_0$ is the Bessel function of the first kind. For the error term we have the bounds
\begin{align*}
\delta_\ell(\varphi) \ll
\begin{cases}
\varphi^2 O(1), & 0 < \varphi \le C/\ell, \\
\varphi^{1/2} O(\ell^{-3/2}), & C/ \ell \le \varphi \le \pi - \varepsilon ,%
\end{cases}%
\end{align*}
where $C$ is a fixed positive constant and the constants involved in the $O$-notation depend on $C$ only.
\end{lemma}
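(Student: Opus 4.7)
The plan is to derive Hilb's formula from the Mehler-Dirichlet integral representation
\[
P_\ell(\cos\varphi) = \frac{\sqrt{2}}{\pi}\int_0^\varphi \frac{\cos((\ell+1/2)t)}{\sqrt{\cos t - \cos\varphi}}\, dt, \qquad 0 < \varphi < \pi,
\]
and compare it term-by-term with the Bessel integral representation
\[
\left(\frac{\varphi}{\sin\varphi}\right)^{1/2} J_0((\ell+1/2)\varphi) = \left(\frac{\varphi}{\sin\varphi}\right)^{1/2} \frac{2}{\pi}\int_0^{\pi/2} \cos((\ell+1/2)\varphi\sin\theta)\, d\theta.
\]
The bridge between the two is the classical substitution $\sin(t/2) = \sin(\varphi/2)\sin\theta$, which bijects $[0,\varphi]$ onto $[0,\pi/2]$ with $\sqrt{\cos t - \cos\varphi} = \sqrt{2}\sin(\varphi/2)\cos\theta$. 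Under this substitution the Mehler-Dirichlet integral takes the form $\frac{2}{\pi}\int_0^{\pi/2} \cos((\ell+1/2)t(\theta))\cdot A(\theta,\varphi)\, d\theta$, with $t(\theta) = \varphi\sin\theta + O(\varphi^3\sin^3\theta)$ and $A(\theta,\varphi)$ a smooth, bounded amplitude. Expanding $A$ and the phase in powers of $\varphi$ and collecting the zeroth-order contribution in the amplitude exactly reconstructs $(\varphi/\sin\varphi)^{1/2} J_0((\ell+1/2)\varphi)$, identifying the remainder as $\delta_\ell(\varphi)$.

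For the error bound I would split into two regimes. In the near-origin regime $0 < \varphi \le C/\ell$, the cosine kernel is non-oscillatory, and a direct Taylor comparison using $P_\ell(\cos\varphi) = 1 - \ell(\ell+1)\sin^2(\varphi/2) + O(\ell^4\varphi^4)$ against the Taylor series of $J_0((\ell+1/2)\varphi)\cdot (\varphi/\sin\varphi)^{1/2}$ (which matches the quadratic coefficient up to $O(\varphi^2)$) yields $\delta_\ell(\varphi) = O(\varphi^2)$ with an absolute constant. In the oscillatory regime $C/\ell \le \varphi \le \pi - \varepsilon$, the remainder is an oscillatory integral on $[0,\pi/2]$ with smooth amplitude and phase $(\ell+1/2) t(\theta)$ whose $\theta$-derivative is bounded away from zero on the bulk of the interval. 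Two successive integrations by parts, combined with careful treatment of the endpoint at $\theta = \pi/2$ (where the amplitude derivative has a mild square-root singularity), deliver the announced $O(\varphi^{1/2}\ell^{-3/2})$ bound.

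The main technical obstacle is extracting the full $\ell^{-3/2}$ gain in the oscillatory regime rather than the $\ell^{-1}$ that a single naive integration by parts provides. The mechanism is the precise half-integer shift $\ell + 1/2$ in the Bessel argument combined with the prefactor $(\varphi/\sin\varphi)^{1/2}$: these are calibrated exactly so that the would-be leading $O(\ell^{-1})$ correction vanishes, leaving only the next-order $O(\ell^{-3/2})$ term. Finally, one checks that the two regime bounds match at the overlap $\varphi \asymp 1/\ell$ (where both evaluate to $O(\ell^{-2})$), thereby establishing uniformity throughout $(0,\pi - \varepsilon]$ as claimed in the lemma.
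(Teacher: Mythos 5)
The paper does not prove this lemma at all: immediately after stating Lemmas \ref{hilb0} and \ref{bessel} it refers the reader to Szeg\H{o} ~\cite[Theorem 8.21.6]{szego}, where Hilb's formula is established by a Liouville-type transformation of the Legendre differential equation and a comparison with Bessel's equation, not by the Mehler--Dirichlet integral. So your route, via the Mehler--Dirichlet representation, the substitution $\sin(t/2)=\sin(\varphi/2)\sin\theta$, and a van der Corput / stationary phase analysis, is a genuinely different one. That said, as written it has a real gap precisely where the proof has to do work.

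After your substitution the Mehler--Dirichlet integral becomes
\[
P_\ell(\cos\varphi)=\frac{2}{\pi}\int_0^{\pi/2}\frac{\cos\big((\ell+1/2)\,t(\theta)\big)}{\cos\!\big(t(\theta)/2\big)}\,d\theta,
\qquad t(\theta)=2\arcsin\!\big(\sin(\varphi/2)\sin\theta\big),
\]
whose amplitude $1/\cos(t(\theta)/2)$ is genuinely $\theta$-dependent and whose phase $t(\theta)$ is not $\varphi\sin\theta$. Your claim that ``collecting the zeroth-order contribution in the amplitude exactly reconstructs $(\varphi/\sin\varphi)^{1/2}J_0((\ell+1/2)\varphi)$'' is therefore not correct: setting $\varphi\to 0$ in the amplitude gives the constant $1$, hence merely $J_0((\ell+1/2)\varphi)$ and not the stated main term. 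This matters because the prefactor cannot be swept into the error: since $J_0((\ell+1/2)\varphi)=O\big((\ell\varphi)^{-1/2}\big)$ in the oscillatory range, the discrepancy $\big[(\varphi/\sin\varphi)^{1/2}-1\big]J_0((\ell+1/2)\varphi)$ is $O(\varphi^{3/2}\ell^{-1/2})$, which exceeds the permitted $O(\varphi^{1/2}\ell^{-3/2})$ by a factor $\varphi\ell\gg 1$ on most of that range. So the heart of the argument --- extracting the exact $(\varphi/\sin\varphi)^{1/2}$ coefficient from the $\theta$-dependent amplitude and the $O(\varphi^{3})$ phase correction, and showing that the remainder integrates (with stationary phase at the degenerate endpoint $\theta=\pi/2$, where $\cos\theta$ vanishes and naive integration by parts fails) to $O(\varphi^{1/2}\ell^{-3/2})$ --- is exactly what you have deferred. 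Without it, the proposal is a plausible blueprint, not a proof; you would need to either carry out that computation or fall back, as the paper does, on citing Szeg\H{o}'s theorem.
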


\begin{lemma}
\label{bessel} The following asymptotic representation for the Bessel functions of the first kind holds:
\begin{align*}
J_0(x)&=\left( \frac{2}{\pi x} \right)^{1/2} \cos(x- \pi/4) %
 \sum_{k=0}^\infty (-1)^k g(2k) \; (2 x)^{-2 k}  \\
&\;\;+\left( \frac{2}{\pi x} \right)^{1/2} \cos(x+ \pi/4)
\sum_{k=0}^\infty (-1)^k g(2k+1)\; (2 x)^{-2 k-1},
\end{align*}
where $\varepsilon>0$, $|\arg x|\le \pi-\varepsilon$, $g(0)=1$ and $g(k)=\frac{(-1)(-3^2) \cdots (-(2k-1)^2)}{2^{2k} k!}=(-1)^k \frac{[(2k)!!]^2}{2^{2k} k!}$.
\end{lemma}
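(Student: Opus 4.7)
My plan is to obtain the stated expansion from the half-sum of the classical asymptotic expansions of the Hankel functions $H_0^{(1)}$ and $H_0^{(2)}$, via the identity $J_0=(H_0^{(1)}+H_0^{(2)})/2$; on the positive real axis this reduces to $J_0=\operatorname{Re} H_0^{(1)}$, which ultimately explains the splitting into two cosine-type series in the statement.

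The starting point is Hankel's contour-integral representation; after a standard steepest-descent deformation through the saddle at the origin one obtains, uniformly for $|\arg x|\le\pi-\varepsilon$,
\begin{equation*}
H_0^{(1)}(x)=\sqrt{\tfrac{2}{\pi x}}\,\frac{e^{i(x-\pi/4)}}{\Gamma(1/2)}\int_0^\infty e^{-s}\,s^{-1/2}\bigl(1+\tfrac{is}{2x}\bigr)^{-1/2}ds,
\end{equation*}
together with the conjugate representation for $H_0^{(2)}$ in which $i$ is replaced by $-i$. I would then expand $(1+is/(2x))^{-1/2}$ by the binomial series $\sum_{k=0}^{N-1}\binom{-1/2}{k}(is/(2x))^{k}+R_N$, interchange sum and integral, and evaluate each term via $\Gamma(1/2)^{-1}\int_0^\infty e^{-s}s^{k-1/2}ds=\Gamma(k+1/2)/\Gamma(1/2)$. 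A direct simplification using $\binom{-1/2}{k}=(-1)^{k}(2k)!/(4^{k}(k!)^{2})$ and $\Gamma(k+1/2)/\Gamma(1/2)=(2k)!/(4^{k}k!)$ shows that the product of these two combinatorial factors is precisely $g(k)$ as defined in the lemma; hence
\begin{equation*}
H_0^{(1)}(x)=\sqrt{\tfrac{2}{\pi x}}\,e^{i(x-\pi/4)}\sum_{k=0}^{N-1}\frac{g(k)\,i^{k}}{(2x)^{k}}+O\bigl(|x|^{-N-1/2}\bigr).
\end{equation*}

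Averaging with the conjugate expansion for $H_0^{(2)}$ and splitting the sum by parity of $k$ then yields the two series in the statement. For $k=2j$, the combination $[i^{2j}e^{i(x-\pi/4)}+(-i)^{2j}e^{-i(x-\pi/4)}]/2$ collapses to $(-1)^{j}\cos(x-\pi/4)$, giving the first series; for $k=2j+1$, the analogous combination simplifies, via $\operatorname{Re}[ie^{i\alpha}]=\cos(\alpha+\pi/2)$, to $(-1)^{j}\cos(x+\pi/4)$, giving the second. The truncation error is controlled by the standard bound $|R_N(is/(2x))|\ll (s/|x|)^{N}$ on the principal branch, which integrates to an $O(|x|^{-N-1/2})$ remainder uniform throughout the sector.

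The main obstacle is almost entirely clerical: correctly matching the combinatorial coefficients produced by the binomial expansion and the Gamma integrals against the double-factorial expression defining $g(k)$, and verifying that the contour deformation underlying the integral representation is valid throughout $|\arg x|\le\pi-\varepsilon$. Both points are classical and are carried out in full detail in Watson's \emph{Treatise on the Theory of Bessel Functions}, so the proof ultimately amounts to invoking the corresponding reference after matching notation.
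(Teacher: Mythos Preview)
The paper does not prove this lemma at all: immediately after stating it, the authors write ``For a proof of Lemma~\ref{hilb0} and Lemma~\ref{bessel} we refer to \cite[Theorem~8.21.6]{szego} and \cite[section~5.11]{lebedev} respectively.'' Your sketch via the Hankel integral representation, binomial expansion, and termwise Gamma-integration is the classical derivation (it is essentially what Lebedev does in the cited section, and what Watson does as well), and your coefficient check that $\binom{-1/2}{k}\Gamma(k+\tfrac12)/\Gamma(\tfrac12)=g(k)$ together with the parity splitting are both correct. So your proposal is sound and in fact supplies more detail than the paper, which is content to quote the result.
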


For a proof of Lemma \ref{hilb0} and Lemma \ref{bessel} we refer to \cite[Theorem 8.21.6]{szego} and \cite[section 5.11]{lebedev} respectively.  \\

Recall the scaled variable $\psi$ related to $\theta$ via \eqref{eq:psi rescale}, so that an application of lemmas \ref{hilb0} and \ref{bessel}, yields that, for $\ell \ge 1$ and $C< \psi < \ell \pi$,
\begin{align*}
P_{\ell}(\cos(\psi/\ell))
&=\sqrt{\frac{2}{\pi}} \frac{\ell^{-1/2}}{\sin^{1/2} (\psi/\ell)}\Big[ \cos ((\ell+1/2)\psi/\ell-\pi/4) -   \frac{1}{2 \ell \psi/\ell}   \cos ((\ell+1/2)\psi/\ell+\pi/4)   \Big]\\
&\;\;+  O((\psi/\ell)^{1/2} \ell^{-3/2} ).
\end{align*}
Observing that
\begin{align*}
&\frac{\ell^{-1/2}}{\sin^{1/2} (\psi/\ell)} = \ell^{-1/2} \left[\frac{1}{\sqrt{\psi/\ell}} +O((\psi/\ell)^{\frac 3 2})  \right] = \frac{1}{\sqrt \psi} + O(\psi^{3/2} \ell^{-2}),\\
&\frac{\ell^{-1/2}}{\sin^{1/2} (\psi/\ell)}   \frac{1}{2 \psi} =O( \psi^{-3/2}),
\end{align*}
we write
\begin{align} \label{22:31}
P_{\ell}(\cos(\psi/\ell))=\sqrt{\frac{2}{\pi}}\frac{1}{\sqrt \psi} \cos ((\ell+1/2)\psi/\ell-\pi/4)+O(\psi^{-3/2})+O(\psi^{3/2} \ell^{-2}).
\end{align}
A repeated application of lemmas \ref{hilb0} and \ref{bessel} also yields an asymptotic estimate for the first couple of derivatives of the Legendre Polynomials \cite[Lemma 9.3]{CMW}:
\begin{align} 
P'_{\ell }(\cos (\psi/\ell) )&=\sqrt{\frac{2}{\pi }}\frac{\ell^{1-1/2}}
{\sin ^{1+1/2} (\psi/\ell) }\left[\sin ((\ell+1/2) \psi/\ell-\pi/4) -\frac{1}{8\ell \psi/\ell }\sin
((\ell+1/2)\psi/\ell+\pi/4) \right] \nonumber \\
&\;\;+O(\ell^{- \frac 1 2} (\psi/\ell)^{-\frac 5 2 }), \nonumber
\end{align}
and
\begin{align}  
&P''_{\ell }(\cos (\psi/\ell) ) \nonumber\\
&=\sqrt{\frac{2}{\pi }} \frac{\ell^{2-1/2}}{\sin ^{2+1/2} (\psi/\ell) }\left[-\cos ((\ell+1/2)\psi/\ell-\pi/4) +\frac{1}{8\ell \psi/\ell }\cos ((\ell+1/2)\psi/\ell+\pi/4)\right] \nonumber \\
&\;\; -\sqrt{\frac{2}{\pi }} \frac{\ell ^{1-1/2}}{\sin ^{3+1/2} (\psi/\ell) }\left[\cos
((\ell-1+1/2)\psi/\ell+\pi/4)+\frac{1}{8\ell \psi/\ell }\cos ((\ell-1+1/2)\psi/\ell-\pi/4) \right] \nonumber \\
&\;\; +O(\psi^{-7/2} \ell^{4}). \nonumber
\end{align}

\vspace{2mm}

Since we have that
\begin{align*}
&\frac{\ell^{1-1/2}}{\sin^{1+1/2} (\psi/\ell) }=\ell^{1-1/2} \left[ \frac{1}{(\psi/\ell)^{3/2}}+O((\psi/\ell)^{1/2}) \right]= \frac{\ell^{2}}{\psi^{3/2}} +O(\psi^{1/2}) \\
&\frac{\ell^{1-1/2}}{\sin ^{1+1/2} (\psi/\ell) }  \frac{1}{ \psi} =O( \psi^{-5/2}  \ell^2),
\end{align*}
we have
\begin{align} \label{22:32}
P'_{\ell }(\cos (\psi/\ell) )=\sqrt{\frac{2}{\pi }}    \frac{\ell ^{1-1/2}}{\sin ^{1+1/2} (\psi/\ell) } \sin ((\ell+1/2)\psi/\ell-\pi/4) + O(\psi^{-5/2} \ell^{2}), \end{align}
and observing that
\begin{align*}
& \frac{\ell^{2-1/2}}{\sin ^{2+1/2} (\psi/\ell) }=\ell^{2-1/2} \left[\frac{1}{(\psi/\ell)^{5/2}} + O((\psi/\ell)^{- 1/2} )\right]=\frac{\ell^4}{\psi^{5/2}} +(\psi^{-1/2} \ell^2) \\
& \frac{\ell^{2-1/2}}{\sin ^{2+1/2} (\psi/\ell) } \frac{1}{\psi} =O(\psi^{-7/2} \ell^{4}) \\
&  \frac{\ell^{1-1/2}}{\sin ^{3+1/2} (\psi/\ell) }=\ell^{1-1/2}  \left[\frac{1}{(\psi/\ell)^{7/2}} + O((\psi/\ell)^{-3/2})  \right]=\frac{\ell^4}{\psi^{7/2}}+O(\psi^{-3/2}  \ell^2)
\end{align*}
we obtain
\begin{equation} \label{22:33}
\begin{split}
P''_{\ell }(\cos (\psi/\ell) ) &=- \sqrt{\frac{2}{\pi }}  \frac{\ell^{2-1/2}}{\sin ^{2+1/2} (\psi/\ell) } \cos ((\ell+1/2)\psi/\ell-\pi/4)  \\&+O(\psi^{-3/2} \ell^2) +O(\psi^{-7/2} \ell^4). 
\end{split}
\end{equation}
The estimates in \eqref{22:31}, \eqref{22:32} and \eqref{22:33}, imply that  for $\ell \ge 1$ and uniformly for $C< \psi < \ell \pi$, with $C>0$, we have
\begin{align} \label{sell}
P_{\ell}(\cos(\psi/\ell))&=\sqrt{\frac{2}{\pi}}\frac{1}{\sqrt \psi}\cos ((\ell+1/2)\psi/\ell -\pi/4)+O(\psi^{-3/2})+O(\psi^{3/2} \ell^{-2}),\\
\{P_{\ell}(\cos(\psi/\ell))\}^2&= \frac{2}{\pi} \frac{1}{ \psi} \cos^2 ((\ell+1/2)\psi/\ell -\pi/4)+O(\psi^{-2})+O(\psi \ell^{-2}).  \nonumber
\end{align}
With the same abuse of notation as above,
we write ${\bf S}_\ell(\psi):={\bf S}_\ell(x)$ as in Lemma \ref{13:48}, and in analogous manner for its individual entries
$S_{ij;\ell}(\psi):=S_{11;\ell}(x)$. We have
\begin{align}
S_{11;\ell}(\psi)&=2 \sqrt{\frac 2 \pi} \frac{1}{\sqrt \psi} \cos((\ell+1/2)\psi/\ell-\pi/4)- \frac{4}{\pi} \frac{1}{\psi}  \sin^2((\ell+1/2)\psi/\ell-\pi/4)  \label{Sell11} \\&
\;\;+O(\psi^{-3/2})+O(\psi^{3/2} \ell^{-2}), \nonumber \\
S_{22;\ell}(\psi)
&=- 2 \sqrt{\frac{2}{\pi }}  \frac{1}{\psi^{3/2}} \sin ((\ell+1/2)\psi/\ell-\pi/4) +O(\psi^{1/2} \ell^{-2})+ O(\psi^{-5/2} ). \label{Sell22}
\end{align}

The next proposition prescribes a precise asymptotic expression for the density function $K_{1,\ell}(\cdot)$ via a Taylor expansion of the relevant Gaussian expectation as a function of the associated covariance matrix entries.

\begin{proposition} \label{prop:19:31}
For $C >0$ sufficiently large we have the following expansion on $C < \psi < \ell \pi$:
\begin{align} \label{taylorx}
K_{1,\ell}(\psi) &=\frac{ \sqrt{\ell(\ell+1)}}{2 \sqrt 2} + L_{\ell}(\psi)+E_{\ell}(\psi),
\end{align}
with the leading term
\begin{align*}
L_{\ell}(\psi)&=\frac{ \sqrt{\ell(\ell+1)}}{4 \sqrt 2} \left[  s_{\ell}(\psi) + \frac{1}{2 } {\rm tr} \,{\bf S}_{\ell}(\psi) + \frac{3}{4 } s^2_{\ell}(\psi) + \frac{1}{4 } s_{\ell}(\psi)\,{\rm tr}\,{\bf S}_{\ell}(\psi)- \frac{1}{16 } {\rm tr} \, {\bf S}^2_{\ell}(\psi) - \frac{1}{32 }    ({\rm tr} \, {\bf S}_{\ell}(\psi))^2   \right],
\end{align*}
where $s_{\ell}(\psi)=P_{\ell}(\cos(\psi/\ell))$, and the error term $E_{\ell}(\psi)$ is bounded by
\begin{align*}
|E_{\ell}(\psi)|=O(\ell \cdot (|s_\ell(\psi)|^3 + |{\bf S}_\ell(\psi)|^3 )),
\end{align*}
with constant involved in the $O$-notation absolute.
\end{proposition}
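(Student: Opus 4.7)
The plan is to derive \eqref{taylorx} by a second-order Taylor expansion of the Kac--Rice density $K_{1,\ell}(\psi)$ around its ``isotropic'' value $\tfrac{\sqrt{\ell(\ell+1)}}{2\sqrt 2}$, which corresponds to $s_\ell(\psi)=0$ and $\mathbf{S}_\ell(\psi)=\mathbf{0}$. For $C$ sufficiently large and $\psi>C$, the estimates \eqref{sell}, \eqref{Sell11}, \eqref{Sell22} force $|s_\ell(\psi)|$ and $\|\mathbf{S}_\ell(\psi)\|_{\mathrm{op}}$ to be $O(\psi^{-1/2})$, which legitimizes the convergent power-series manipulations below.

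First, combining \eqref{K1} with Lemma \ref{13:48} and the identity $\operatorname{Var}(T_\ell(x))=1-s_\ell(\psi)$ one rewrites
\begin{equation*}
K_{1,\ell}(\psi)=\frac{1}{\sqrt{2\pi}\,\sqrt{1-s_\ell(\psi)}}\cdot\sqrt{\tfrac{\ell(\ell+1)}{2}}\cdot G\bigl(\mathbf{S}_\ell(\psi)\bigr),\qquad G(\mathbf{S}):=\mathbb{E}\!\left[\sqrt{\mathbf{Z}^T(\mathbf{I}_2+\mathbf{S})\mathbf{Z}}\right],
\end{equation*}
where $\mathbf{Z}\sim N(0,\mathbf{I}_2)$. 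At $s_\ell=0$, $\mathbf{S}_\ell=\mathbf{0}$ one has $G(\mathbf{0})=\mathbb{E}\|\mathbf{Z}\|=\sqrt{\pi/2}$, and the three factors combine to give the leading constant $\tfrac{\sqrt{\ell(\ell+1)}}{2\sqrt 2}$ of \eqref{taylorx}.

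Next, expand the two scalar factors. For the variance factor, $(1-s)^{-1/2}=1+s/2+3s^2/8+O(s^3)$. For $G$, pass to polar coordinates $\mathbf{Z}=Ru$ with $R$ Rayleigh, $u$ uniform on $S^1$, and $R\perp u$; then $\sqrt{\mathbf{Z}^T(\mathbf{I}_2+\mathbf{S})\mathbf{Z}}=R\sqrt{1+u^T\mathbf{S}u}$ and expand $\sqrt{1+t}=1+t/2-t^2/8+O(t^3)$ uniformly in $|t|\le 1/2$. Using $\mathbb{E}[R]=\sqrt{\pi/2}$, $\mathbb{E}[u^T\mathbf{S}u]=\tfrac12\operatorname{tr}\mathbf{S}$, and the elementary identity
\begin{equation*}
\mathbb{E}\bigl[(u^T\mathbf{S}u)^2\bigr]=\tfrac14\operatorname{tr}(\mathbf{S}^2)+\tfrac18(\operatorname{tr}\mathbf{S})^2
\end{equation*}
(valid for any symmetric $2\times 2$ matrix $\mathbf{S}$, via $\int_0^{2\pi}\cos^4=\int_0^{2\pi}\sin^4=3\pi/4$ and $\int_0^{2\pi}\cos^2\sin^2=\pi/4$), one arrives at
\begin{equation*}
G(\mathbf{S})=\sqrt{\pi/2}\left[1+\tfrac14\operatorname{tr}\mathbf{S}-\tfrac{1}{32}\operatorname{tr}(\mathbf{S}^2)-\tfrac{1}{64}(\operatorname{tr}\mathbf{S})^2+O(|\mathbf{S}|^3)\right].
\end{equation*}

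Finally, multiplying the two expansions and collecting all terms of total order $\le 2$ in $(s_\ell,\mathbf{S}_\ell)$, then factoring $\tfrac{\sqrt{\ell(\ell+1)}}{4\sqrt 2}$, one reads off exactly the six coefficients $1,\tfrac12,\tfrac34,\tfrac14,-\tfrac{1}{16},-\tfrac{1}{32}$ of $L_\ell(\psi)$. The residuals from the two Taylor expansions, once multiplied by the $\sqrt{\ell(\ell+1)/2}=O(\ell)$ prefactor, yield the bound $O\bigl(\ell(|s_\ell|^3+|\mathbf{S}_\ell|^3)\bigr)$ claimed for $E_\ell(\psi)$. The main technical point is the careful algebraic bookkeeping to confirm that the six quadratic-and-below coefficients come out precisely as stated and that mixed cross-terms (such as $s_\ell^2|\mathbf{S}_\ell|$ or $|s_\ell||\mathbf{S}_\ell|^2$) are absorbed into the stated error via $|ab|\le \tfrac12(|a|^2+|b|^2)$-type inequalities; uniformity of the Taylor remainders is then immediate since $|s_\ell|,\|\mathbf{S}_\ell\|_{\mathrm{op}}<1/2$ throughout $\psi>C$.
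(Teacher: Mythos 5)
Your proof is correct, and it takes a genuinely cleaner route than the paper's. The authors start from the Kac--Rice representation
\begin{equation*}
K_{1,\ell}(\psi)= \frac{1}{\sqrt{2 \pi}} \frac{1}{\sqrt{1-s_\ell}}  \frac{1}{2 \pi  \sqrt{  \det {\bf \Delta}_\ell}}
\frac{\sqrt{\ell(\ell+1)}}{\sqrt 2} \iint_{\mathbb{R}^2} \|z\| \exp \Big\{ -\tfrac 1 2 z {\bf \Delta}^{-1}_\ell z^t \Big\} d z,
\end{equation*}
and then Taylor-expand \emph{four} objects separately --- $(1-s_\ell)^{-1/2}$, the matrix inverse $({\bf I}_2+{\bf S}_\ell)^{-1}$, the exponential, and $(\det({\bf I}_2+{\bf S}_\ell))^{-1/2}$ --- computing the resulting Gaussian moments $\Ic_0,\dots,\Ic_3$ one by one and then recombining, using along the way the $2\times 2$ trace identity $\det{\bf S}=\tfrac12[(\tr{\bf S})^2-\tr{\bf S}^2]$. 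You instead observe at the outset that the normalized Gaussian integral is nothing but $\E_{W\sim N(0,{\bf \Delta}_\ell)}\|W\|=\E_{Z\sim N(0,{\bf I}_2)}\sqrt{Z^T({\bf I}_2+{\bf S}_\ell)Z}=:G({\bf S}_\ell)$, which collapses the determinant, the inverse, and the exponential into a single object; passing to polar coordinates $Z=Ru$ and using the independence $R\perp u$ then reduces the whole expansion to the scalar Taylor series of $\sqrt{1+t}$ plus two trigonometric moments of $u^T{\bf S}u$. The identity $\E[(u^T{\bf S}u)^2]=\tfrac14\tr({\bf S}^2)+\tfrac18(\tr{\bf S})^2$ you use is correct for diagonal (more generally symmetric) $2\times 2$ ${\bf S}$, and multiplying your $G({\bf S}_\ell)$ expansion by $(1-s_\ell)^{-1/2}=1+\tfrac12 s_\ell+\tfrac38 s_\ell^2+O(s_\ell^3)$ and the prefactor $\tfrac1{\sqrt{2\pi}}\sqrt{\ell(\ell+1)/2}$ reproduces precisely the six coefficients $1,\tfrac12,\tfrac34,\tfrac14,-\tfrac1{16},-\tfrac1{32}$ of $L_\ell(\psi)$; the cubic cross-terms such as $s_\ell^2\,\tr{\bf S}_\ell$ and $s_\ell\,\tr({\bf S}_\ell^2)$ do appear and are absorbed into $O(\ell(|s_\ell|^3+|{\bf S}_\ell|^3))$ via Young's inequality exactly as you say. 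Interestingly, the paper itself performs this same change of variables $\xi={\bf \Delta}^{-1/2}_\ell z$ in the proof of Theorem~\ref{thm:main asympt}(2), so your approach is in that sense already latent in the paper --- you have simply applied the same simplification here, where it saves more work. Both routes are correct; yours buys less bookkeeping, the paper's has the (mild) advantage of producing the individual Gaussian moments $\Ic_k$ explicitly, which can be reused for higher-order expansions.
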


\begin{proof}
To prove Proposition \ref{prop:19:31} we perform a precise Taylor analysis for the density function $K_{1,\ell}(\psi)$, assuming that both $s_{\ell}(\psi)$ and the entries of ${\bf S}_{\ell}(\psi)$ are small.
We introduce the scaled covariance matrix (see \eqref{eq:Omega eval})
\begin{align*}
{\bf \Delta}_\ell(\psi)= \frac{2}{\ell (\ell+1)} {\bf \Omega}_\ell(\psi) &= {\bf I}_2 + {\bf S}_\ell(\psi).
\end{align*}
The density function $K_{1,\ell}(\cdot)$ could be expressed as
\begin{align*}
K_{1,\ell}(\psi)= \frac{1}{\sqrt{2 \pi}} \frac{1}{\sqrt{1-s_\ell(\psi)}}  \frac{1}{2 \pi  \sqrt{  \text{det} \, {\bf \Delta}_\ell(\psi) }}
\frac{\sqrt{\ell(\ell+1)}}{\sqrt 2} \iint_{\mathbb{R}^2} ||z|| \exp \Big\{ -\frac 1 2 z {\bf \Delta}^{-1}_\ell(\psi) z^t \Big\} d z,
\end{align*}
On $(C,\pi \ell)$, with $C$ sufficiently large, we Taylor expand
\begin{align*}
\frac{1}{\sqrt{1-s_\ell(\psi)}}=1+ \frac{1}{2} s_\ell(\psi) + \frac 3 8 s^2_\ell(\psi) +O(s^3_\ell(\psi)),
\end{align*}
since, using the high degree asymptotics of the Legendre polynomials (Hilb's asymptotics), we see that $|P_{\ell}(\cos(\psi/\ell))|$ is bounded away from $1$.  Next, we consider the Gaussian integral
\begin{align*}
\mathcal{I}({\bf S}_\ell(\psi))=  \iint_{\mathbb{R}^2} ||z||  \exp \Big\{ -\frac 1 2 z (I_2 + {\bf S}_\ell(\psi))^{-1} z^t \Big\} d z,
\end{align*}
observing that on $(C,\pi \ell)$, for $C$ sufficiently large, we can Taylor expand
\begin{align*}
 (I_2 + {\bf S}_\ell(\psi))^{-1}&=I_2 - {\bf S}_\ell(\psi) + {\bf S}^2_\ell(\psi) +O({\bf S}^3_\ell(\psi)),
\end{align*}
and the exponential as follows
\begin{align*}
&\exp \Big\{ -\frac 1 2 z (I_2 + {\bf S}_\ell(\psi))^{-1} z^t \Big\}\\
&\;\; = \exp \Big\{ -\frac {z z^t} 2  \Big\}    \Big[ 1+ \frac 1 2 z \Big( {\bf S}_\ell(\psi) - {\bf S}^2_\ell(\psi) +O({\bf S}^3_\ell(\psi)) \Big)  z^t \\
&\hspace{0.5cm} + \frac{1}{2} \Big( \frac{1}{2} z ( {\bf S}_\ell(\psi) - {\bf S}^2_\ell(\psi) +O({\bf S}^3_\ell(\psi)))  z^t  \Big)^2 + O\Big(  z ( {\bf S}_\ell(\psi) - {\bf S}^2_\ell(\psi) +O({\bf S}^3_\ell(\psi)))  z^t  \Big)^3  \Big],
\end{align*}
so that
\begin{align*}
\mathcal{I}({\bf S}_\ell(\psi))
&=  \iint_{\mathbb{R}^2} ||z|| \exp \Big\{ -\frac {z z^t} 2  \Big\}    \Big[ 1+ \frac 1 2 z {\bf S}_\ell(\psi)  z^t  -  \frac 1 2 z {\bf S}^2_\ell(\psi)  z^t + \frac{1}{8} \Big(  z  {\bf S}_\ell(\psi)   z^t  \Big)^2    \Big] d z+ O({\bf S}^3_\ell(\psi)).
\end{align*}

We introduce the following notation:
\begin{align*}
\mathcal{I}_0({\bf S}_\ell(\psi))&=  \iint_{\mathbb{R}^2} ||z|| \exp \Big\{ -\frac { z z^t} 2 \Big\}  d z= 2 \pi \int_{0}^{\infty} \rho \exp\big\{-\frac{1}{2}  \rho^2 \big \} \rho  \, d \rho = 2 \pi \sqrt{\frac{\pi}{2}}  = \sqrt{2} \pi^{3/2},
\end{align*}
\begin{align*}
\mathcal{I}_1({\bf S}_\ell(\psi))&= \frac 1 2  \iint_{\mathbb{R}^2} ||z|| \exp \Big\{ -\frac {z z^t} 2  \Big\}  z {\bf S}_\ell(\psi)  z^t d z
=  \frac{3 }{2^{3/2}} \pi^{3/2} {\rm tr} \, {\bf S}_{\ell}(\psi),
\end{align*}
and
\begin{align*}
\mathcal{I}_2({\bf S}_\ell(\psi))&= -  \frac 1 2  \iint_{\mathbb{R}^2} ||z|| \exp \Big\{ -\frac {z z^t} 2  \Big\}      z {\bf S}^2_\ell(\psi)  z^t  d z \\
&= -  \frac 1 2  \iint_{\mathbb{R}^2} ||z|| \exp \Big\{ -\frac {z z^t} 2  \Big\}     \big( S^2_{11;\ell}(\psi) z_1^2 +S^2_{22;\ell}(\psi) z_2^2 \big)  d z   \\
&= - \frac{3 }{2^{3/2}} \pi^{3/2} \; {\rm tr} \, {\bf S}_{\ell}(\psi).
\end{align*}
We also define
\begin{align} \label{14:22}
\mathcal{I}_3({\bf S}_\ell(\psi))&= \frac{1}{8}  \iint_{\mathbb{R}^2} ||z|| \exp \Big\{ -\frac { z z^t} 2 \Big\}   \Big(  z  {\bf S}_\ell(\psi)   z^t  \Big)^2 d z \nonumber \\
&=\frac{1}{8}  \iint_{\mathbb{R}^2} ||z|| \exp \Big\{ -\frac {z z^t} 2  \Big\}   \big(  S^2_{11;\ell}(\psi) z_1^4+S^2_{22;\ell}(\psi) z_2^4
+2S_{11;\ell}(\psi) S_{22;\ell}(\psi) z_1^2 z_2^2 \big) d z,
\end{align}
and note that
\begin{equation}
\begin{split}
\label{eq:(z1^2+z2^2)^2}
&\iint_{\mathbb{R}^2} ||z|| \exp \Big\{ -\frac {z z^t} 2  \Big\} (z_1^2 + z_2^2)^2 d z = 2 \pi \int_0^{\infty} \rho \exp \Big\{ - \frac{\rho^2}{2} \Big\} \rho^4 \rho d \rho = 2 \frac{15}{\sqrt 2} \pi^{3/2},\\
 &\iint_{\mathbb{R}^2} ||z|| \exp \Big\{ -\frac {z z^t} 2  \Big\}   z_1^4 d z =  \frac{15}{ \sqrt 2}  \frac{3}{4} \pi^{3/2},
\end{split}
\end{equation}
and that
\begin{equation}
\label{int z1^2z2^2}
\begin{split}
& \iint_{\mathbb{R}^2} ||z|| \exp \Big\{ -\frac {z z^t} 2  \Big\}   z_1^2 z_2^2 d z \\
&= \frac 1 2  \iint_{\mathbb{R}^2} ||z|| \exp \Big\{ -\frac 1 2 z z^t \Big\} (z_1^2 + z_2^2)^2 d z -  \iint_{\mathbb{R}^2} ||z|| \exp \Big\{ -\frac 1 2 z z^t \Big\}   z_1^4 d z 
=  \frac{15}{\sqrt 2} \frac 1 4  \pi^{3/2}.
\end{split}
\end{equation}
Substituting \eqref{eq:(z1^2+z2^2)^2} and \eqref{int z1^2z2^2} into \eqref{14:22}, we obtain
\begin{align*}
\mathcal{I}_3({\bf S}_\ell(\psi))&=\frac{1}{8}  \frac{15}{4 \sqrt 2}  \pi^{3/2}  \Big(  3 S^2_{11;\ell}(\psi)  +  3 S^2_{22;\ell}(\psi)   +2S_{11;\ell}(\psi) S_{22;\ell}(\psi)   \Big)\\
 &= \frac{15 \sqrt 2 }{64} \pi^{3/2}  \{2    {\rm tr} \,{\bf S}^2_{\ell}(\psi) + [  {\rm tr} \,{\bf S}^2_{\ell}(\psi)]^2 \}.
\end{align*}

Write
\begin{align*}
&\mathcal{I}({\bf S}_\ell(\psi))\\
&= \mathcal{I}_0({\bf S}_\ell(\psi)) +\mathcal{I}_1({\bf S}_\ell(\psi))+\mathcal{I}_2({\bf S}_\ell(\psi))+\mathcal{I}_3({\bf S}_\ell(\psi)) +  O({\bf S}^3_{\ell}(\psi))\\
&= \sqrt{2} \pi^{3/2}  +  \frac{3 }{2^{3/2}} \pi^{3/2} \; {\rm tr} \,{\bf S}_{\ell}(\psi)  - \frac{9 }{16 \sqrt 2} \pi^{3/2} \; {\rm tr} \,{\bf S}^2_{\ell}(\psi)+ \frac{15 \sqrt 2 }{64} \pi^{3/2}   [{\rm tr} \,{\bf S}_{\ell}(\psi)]^2   +  O({\bf S}^3_{\ell}(\psi)).
\end{align*}
We finally expand
\begin{align*}
 \frac{1}{ \sqrt{  \text{det} \, {\bf \Delta}_\ell(\psi) }}  = \frac{1}{\sqrt{  \text{det} ( I_2 + {\bf S}_\ell(\psi))}};
\end{align*}
note that
\begin{align*}
\text{det}(I_2+{\bf S}_{\ell}(\psi))&=[1+S_{11;\ell}(\psi)][1+S_{22;\ell}(\psi)]=1+{\rm tr} \,{\bf S}_{\ell}(\psi)  + {\rm det} \,{\bf S}_{\ell}(\psi) ,
\end{align*}
and so,
\begin{align*}
 \frac{1}{ \sqrt{  \text{det} \, {\bf \Delta}_\ell(\psi) }} &=1 - \frac{1}{2} \big[ {\rm tr} \,{\bf S}_{\ell}(\psi) + {\rm det} \,{\bf S}_{\ell}(\psi)  \big] + \frac 3 8  \big[ {\rm tr} \,{\bf S}_{\ell}(\psi) + {\rm det} \,{\bf S}_{\ell}(\psi) \big]^2 + O({\bf S}^3_{\ell}(\psi))\\
&= 1-  \frac{1}{2}  {\rm tr} \,{\bf S}_{\ell}(\psi) -  \frac{1}{2}  {\rm det} \,{\bf S}_{\ell}(\psi) + \frac 3 8  [{\rm tr} \,{\bf S}_{\ell}(\psi)]^2  + O({\bf S}^3_{\ell}(\psi))\\
&=1-  \frac{1}{2}  {\rm tr} \,{\bf S}_{\ell}(\psi) +  \frac{1}{4}  {\rm tr} \,{\bf S}^2_{\ell}(\psi)  + \frac 1 8  [{\rm tr} \,{\bf S}_{\ell}(\psi)]^2  + O({\bf S}^3_{\ell}(\psi)),
\end{align*}
where we have used the fact that  $S^2_{11;\ell}(\psi)$ and $S^2_{22;\ell}(\psi)$ are the eigenvalues of ${\bf S}^2_{\ell}(\psi)$, and we have written ${\rm det} \,{\bf S}_{\ell}(\psi)$ as follows:
\begin{align*}
{\rm det} \,{\bf S}_{\ell}(\psi)
 &= \frac{1}{2} \left\{  [S_{11;\ell}(\psi) + S_{22;\ell}(\psi)]^2 - [S^2_{11;\ell}(\psi)+ S^2_{22;\ell}(\psi)] \right\} = \frac 1 2 \left\{ \left[{\rm tr} \,{\bf S}_{\ell}(\psi) \right]^2 - {\rm tr} \,{\bf S}^2_{\ell}(\psi)  \right\}.
\end{align*}
In conclusion, we have:
\begin{align*}
&K_{1,\ell}(\psi)  \nonumber
 =\frac{\sqrt{\ell(\ell+1)} }{2^2 \pi \sqrt{ \pi}} \Big[   1+ \frac{1}{2} s_\ell(\psi) + \frac 3 8 s^2_\ell(\psi) +O(s^3_\ell(\psi)) \Big]\nonumber  \\
 & \;\; \times  \left[  \sqrt{2} \pi^{3/2}  +  \frac{3 }{2^{3/2}} \pi^{3/2} \; {\rm tr} \,{\bf S}_{\ell}(\psi) - \frac{9 }{16 \sqrt 2} \pi^{3/2} \; {\rm tr} \,{\bf S}^2_{\ell}(\psi)+ \frac{15 \sqrt 2 }{64} \pi^{3/2}   [{\rm tr} \,{\bf S}_{\ell}(\psi)]^2   +  O({\bf S}^3_{\ell}(\psi)) \right] \nonumber \\
 &\;\; \times \left[ 1-  \frac{1}{2}  {\rm tr} \,{\bf S}_{\ell}(\psi) +  \frac{1}{4} {\rm tr} \,{\bf S}^2_{\ell}(\psi) + \frac 1 8  [{\rm tr} \,{\bf S}_{\ell}(\psi)]^2  + O({\bf  S}^3_{\ell}(\psi)) \right] \nonumber \\
&=\frac{ \sqrt{\ell(\ell+1)}}{2^2 \sqrt 2} \left[ 2 + s_{\ell}(\psi) + \frac{1}{2 }{\rm tr} \,{\bf S}_{\ell}(\psi) + \frac{3}{4 } s^2_{\ell}(\psi) + \frac{1}{4 } s_{\ell}(\psi) {\rm tr} \,{\bf S}_{\ell}(\psi) - \frac{1}{16 } {\rm tr} \,{\bf S}^2_{\ell}(\psi)- \frac{1}{32 }    [{\rm tr} \,{\bf S}_{\ell}(\psi)]^2   \right] \nonumber \\
&\;\;+O(\ell \cdot s^3_\ell(\psi)) + O(\ell \cdot {\bf S}^3_\ell(\psi)).
\end{align*}
\end{proof}

\subsection{Proof of Theorem \ref{thm:main asympt}(1)}

\begin{proof}
Substituting the estimates \eqref{sell}, \eqref{Sell11} and \eqref{Sell22} into \eqref{taylorx} we obtain
\begin{align*}
K_{1,\ell}(\psi)&= \frac{\sqrt{\ell(\ell+1)}}{2^2 \sqrt 2} \Big[  2 + 2 \sqrt{\frac 2 \pi} \frac{1}{\sqrt \psi} \cos((\ell+1/2)\psi/\ell-\pi/4) \\
&\;\;+\frac{7}{4 \pi} \frac{1}{\psi}  \cos^2((\ell+1/2)\psi/\ell-\pi/4)-\frac{2}{\pi} \frac{1}{\psi}  \sin^2((\ell+1/2)\psi/\ell-\pi/4) \Big]+O(\psi^{-3/2} \ell^{-2} ),
\end{align*}
and, since $\cos^2(x)=\frac{1}{2}[1+ \cos(2x)]$ and $\sin^2(x)=\frac{1}{2}[1- \cos(2x)]$, we can write
\begin{align*}
&\frac{7}{4 \pi \psi}   \cos^2((\ell+1/2)\psi/\ell-\pi/4)-\frac{2}{\pi \psi}  \sin^2((\ell+1/2)\psi/\ell-\pi/4)\\
&=\frac{7}{4 \pi \psi} \frac 1 2 [1+\cos((\ell+1/2)2\psi/\ell-\pi/2)]-\frac{2}{\pi \psi} \frac 1 2 [1-\cos((\ell+1/2)2\psi/\ell-\pi/2) ]\\
&= \frac{7}{4 \pi \psi} \frac 1 2 -\frac{2}{\pi \psi} \frac 1 2 + \Big[\frac{7}{4 \pi \psi} \frac 1 2 + \frac{2}{\pi \psi} \frac 1 2 \Big] \cos((\ell+1/2)2\psi/\ell-\pi/2)\\
&= -\frac{1}{8 \pi \psi}  + \frac{15}{8 \pi \psi}  \cos((\ell+1/2)2\psi/\ell-\pi/2).
\end{align*}
The above implies
\begin{align*}
K_{1,\ell}(\psi)&= \frac{\sqrt{\ell(\ell+1)}}{2^2 \sqrt 2} \Big[  2 + 2 \sqrt{\frac 2 \pi} \frac{1}{\sqrt \psi} \cos((\ell+1/2)\psi/\ell-\pi/4) \\
&\;\;-\frac{1}{8 \pi \psi}  + \frac{15}{8 \pi \psi}  \cos((\ell+1/2)2\psi/\ell-\pi/2)\Big]+O(\psi^{-3/2} \ell^{-2} )\\
&= \frac{\sqrt{\ell(\ell+1)}}{2 \sqrt 2} \Big[  1 +  \sqrt{\frac 2 \pi} \frac{1}{\sqrt \psi} \cos((\ell+1/2)\psi/\ell-\pi/4) \\
&\;\;-\frac{1}{16 \pi \psi}  + \frac{15}{16 \pi \psi}  \cos((\ell+1/2)2\psi/\ell-\pi/2)\Big]+O(\psi^{-3/2} \ell^{-2} ),
\end{align*}
the statement \eqref{eq:nod bias hemi far from boundary} of Theorem \ref{thm:main asympt}(1).

\end{proof}

\section{Proof of Theorem \ref{thm:main asympt}(2): perturbative analysis at the boundary}
The aim of this section is to study the asymptotic behaviour of the density function $K_{1,\ell}(\psi)$ for $0 < \psi < \epsilon_0$ with $\epsilon_0>0$ sufficiently small. We have
\begin{align*}
K_{1,\ell}(\psi)= \frac{1}{\sqrt{2 \pi} \sqrt{1-P_{\ell}(\cos(\psi/\ell))}}  \frac{1}{2 \pi  \sqrt{  \text{det}\, {{\bf \Delta}}_\ell(\psi)}}
\sqrt{\ell (\ell+1)} \iint_{\mathbb{R}^2} ||z|| \exp \Big\{ -\frac 1 2 z^t {{\bf \Delta}}^{-1}_\ell(\psi) z \Big\} d z,
\end{align*}
where ${{\bf \Delta}}_\ell(\psi)$ is the scaled conditional covariance matrix
\begin{align*}
{{\bf \Delta}}_\ell(\psi)&=  {{\bf C}}_{\ell}(\psi)-  \frac{{{\bf B}}^t_{\ell}(\psi) {{\bf B}}_{\ell}(\psi)}{1-P_{\ell}(\cos(\psi/\ell ))}.
\end{align*}
We have that
\begin{align} \label{omp}
1-P_{\ell}(\cos(\psi/\ell ))&=\frac{\ell (\ell+1)}{\ell^2} \frac{\psi^2}{2^2} - \frac{(\ell-1) \ell (\ell+1) (\ell+2)}{4\, \ell^4} \frac{\psi^4}{2^4}  \\
&\;\;+ \frac{1}{36} \frac{(\ell-2)(\ell-1) \ell (\ell+1) (\ell+2)(\ell+3)}{\ell^6} \frac{\psi^6}{2^6}+ O(\psi^8), \nonumber 
\end{align}
with constant involved in the $`O'$-notation absolute. We also have
\begin{align*}
{{\bf B}}^t_{\ell}(\psi)&=\left( \begin{array}{c}
 -\sin (\psi/\ell ) P'_{\ell}(\cos(\psi/\ell )) \left( - \frac{1}{\ell} \right)\\ 0
 \end{array}\right)\\
 &=  \left( \begin{array}{c}
\frac{\ell (\ell+1)}{\ell^2}  \frac{\psi}{2} -  \frac{(\ell-1) \ell (\ell+1) (\ell+2)}{2 \, \ell^4} \frac{\psi^3}{2^3}+\frac{1}{12} \frac{(\ell-2)(\ell-1) \ell (\ell+1) (\ell+2)(\ell+3)}{l^6} \frac{\psi^5}{2^5} +O(\psi^7) \\ 0
 \end{array}\right),
\end{align*}
and ${{\bf C}}_{\ell}(\psi)$ is the $2 \times 2$ symmetric matrix with entries
\begin{align*}
{C}_{\ell,11}(\psi)&=\left[ P'_{\ell}(1)+ \cos(\psi/\ell ) \; P'_{\ell}(\cos(\psi/\ell )) -  \sin^2(\psi/\ell ) \; P''_{\ell}(\cos(\psi/\ell )) \right] \left( - \frac{1}{\ell} \right)^2\\
&=1-\frac{3}{4} \frac{(\ell-1) \ell (\ell+1) (\ell+2)}{ \ell^4} \frac{\psi^2}{2^2} + \frac{5}{24} \frac{(\ell-2)(\ell-1) \ell (\ell+1) (\ell+2)(\ell+3)}{\ell^6 } \frac{\psi^4}{2^4} + O(\psi^6),\\
{{\bf C}}_{\ell,12}(\psi)&= 0,\\
{{\bf C}}_{\ell,22}(\psi)& = \left[ P'_{\ell}(1)-P'_{\ell}(\cos(\psi/\ell )) \right] \left( - \frac{1}{\ell} \right)^2\\
&=\frac{(\ell-1) \ell (\ell+1) (\ell+2)}{4\, \ell^4} \frac{\psi^2}{2^2} - \frac{(\ell-2)(\ell-1) \ell (\ell+1) (\ell+2)(\ell+3)}{24 \ell^6 } \frac{\psi^4}{2^4} + O(\psi^6).
 \end{align*}
We obtain that
\begin{align*}
{{\bf \Delta}}_\ell(\psi)&=\left( \begin{array}{cc}
{\delta}_{11,\ell}(\psi)& 0\\
0& {\delta}_{22,\ell}(\psi)
\end{array}\right),
\end{align*}
with
\begin{align} \label{19:30}
{\delta}_{11,\ell}(\psi)&=\frac{1}{2^8 3^2} \frac{(\ell-2)(\ell-1) \ell (\ell+1) (\ell+2)(\ell+3)}{\ell^6} \psi^4+ O(\psi^6) \nonumber \\
&= \frac{1}{2^8 3^2}  \psi^4+ O( \ell^{-1} \psi^4)+O(\psi^6),
\end{align}
and
\begin{align} \label{eq:19:31}
{\delta}_{22,\ell}(\psi)&= \frac{(\ell-1) \ell (\ell+1) (\ell+2)}{4\, \ell^4} \frac{\psi^2}{2^2}+ O(\psi^4)= \frac{\psi^2}{16} + O( \ell^{-1} \psi^2)+ O(\psi^4).
\end{align}
We introduce the change of variable $\xi={{\bf \Delta}}^{-1/2}_\ell(\psi) z$, and we write
\begin{align*}
K_{1,\ell}(\psi)= \frac{1}{\sqrt{2 \pi} \sqrt{1-P_{\ell}(\cos(\psi/\ell))}}  \frac{1}{2 \pi }
\sqrt{\ell (\ell+1)} \iint_{\mathbb{R}^2} \sqrt{{\delta}_{11,\ell}(\psi) \xi_1^2+{\delta}_{22,\ell}(\psi) \xi_2^2} \exp \Big\{ -\frac{\xi^t  \xi}2  \Big\} d \xi.
\end{align*}
Using the expansions in \eqref{omp}, \eqref{19:30} and \eqref{eq:19:31}, we write
\begin{align*}
K_{1,\ell}(\psi)&= \frac{1}{\sqrt{2 \pi} \sqrt{\psi^2/4 + O(\ell^{-1}\psi^2) + O(\psi^4)} }
\sqrt{\ell (\ell+1)}  \left[ \frac{\psi}{4} + O(\ell^{-1}\psi) + O(\psi^3) \right] \sqrt{\frac{2}{\pi}}\\
&= \sqrt{\ell (\ell+1)}  \frac{1}{2\pi} + O(1) + O(\ell \psi^2),
\end{align*}
which is \eqref{eq:nod bias hemi close to boundary}.

\section{Proof of Corollary \ref{cor}: expected nodal length}

\subsection{Kac-Rice formula for expected nodal length}

The Kac-Rice formula is a meta-theorem allowing one to evaluate the moments of the zero set of a random field satisfying some smoothness and non-degeneracy conditions. For $F:\R^{d}\rightarrow\R$, a sufficiently smooth centred Gaussian random field, we define
\begin{equation*}
K_{1,F}(x):= \frac{1}{\sqrt{2\pi} \sqrt{\var(F(x))}}\cdot \E[|\nabla F(x)|\big| F(x)=0]
\end{equation*}
the zero density (first intensity) of $F$. Then the Kac-Rice formula asserts that for some suitable class of random fields $F$
and $\overline{\Dpc}\subseteq \R^{d}$ a compact closed subdomain of $\R^{d}$, one has the equality
\begin{equation}
\label{eq:Kac Rice meta}
\E[\vol_{d-1}(F^{-1}(0)\cap \overline{\Dpc})] = \int\limits_{\overline{\Dpc}}K_{1,F}(x)dx.
\end{equation}

We would like to apply \eqref{eq:Kac Rice meta} to the boundary-adapted random spherical harmonics $T_{\ell}$ to evaluate the asymptotic law of the total expected nodal length of $T_{\ell}$. Unfortunately the aforementioned non-degeneracy conditions fail at the equator $$\mathcal{E}=\{(\theta, \phi):\:  \theta=\pi/2 \}\subseteq\Hc^{2}.$$ Nevertheless, in a manner inspired by ~\cite[Proposition 2.1]{CKW}, we excise a small neighbourhood of this degenerate set, and apply the Monotone Convergence Theorem so to be able to prove that \eqref{eq:Kac Rice meta} holds precisely, save for the length of the equator that is bound to be contained in the nodal set of $T_{\ell}$, by the Dirichlet boundary condition.

\begin{proposition} \label{KKRR}
The expected nodal length of $T_{\ell}$ satisfies
\begin{equation} \label{cs}
\E[\Lc({T_{\ell}})] = \int_{\Hc^2} K_{1,\ell} (x) d x + 2 \pi,
\end{equation}
where $K_{1,\ell} (\cdot)$ is the zero density of $T_{\ell}$.
\end{proposition}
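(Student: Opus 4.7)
The plan is to apply the Kac-Rice meta-theorem \eqref{eq:Kac Rice meta} in a way that circumvents the degeneracy of $T_{\ell}$ at the equator. By construction, every basis function $Y_{\ell,m}$ with $m \not\equiv \ell \pmod 2$ vanishes identically on $\mathcal{E}$, hence $T_{\ell}\equiv 0$ on $\mathcal{E}$ almost surely, and the equator is automatically part of the nodal set of $T_\ell$, contributing length exactly $2\pi$. The remaining content of the nodal set lies in the open hemisphere $\Hc^{2}\setminus\mathcal{E}$, and the goal reduces to showing that its expected length equals the Kac-Rice integral $\int_{\Hc^{2}}K_{1,\ell}(x)\,dx$.

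For $\varepsilon>0$ I would introduce the excised subdomain
\begin{equation*}
\Hc^{2}_{\varepsilon} := \{(\theta,\phi)\in\Hc^{2} : \theta \leq \pi/2 - \varepsilon\}.
\end{equation*}
On $\Hc^{2}_{\varepsilon}$ the variance $1 - P_{\ell}(\cos(\pi - 2\theta))$ is bounded away from zero, and the conditional covariance matrix ${\bf \Omega}_{\ell}(x)$ supplied by Lemma \ref{13:48} is strictly positive definite (the diagonal entries $\delta_{11,\ell}(\psi)$, $\delta_{22,\ell}(\psi)$ from \eqref{19:30}--\eqref{eq:19:31} are strictly positive once $\theta$ is bounded away from $\pi/2$). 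These are precisely the non-degeneracy hypotheses for the classical Kac-Rice formula applied on $\Hc^{2}_{\varepsilon}$, which yields
\begin{equation*}
\E[\Lc(T_{\ell};\Hc^{2}_{\varepsilon})] \;=\; \int_{\Hc^{2}_{\varepsilon}} K_{1,\ell}(x)\,dx.
\end{equation*}

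I would then pass to the limit $\varepsilon\to 0^{+}$. Since $K_{1,\ell}\geq 0$ and $\Hc^{2}_{\varepsilon}\uparrow\Hc^{2}\setminus\mathcal{E}$, the Monotone Convergence Theorem yields convergence of the right-hand side to $\int_{\Hc^{2}}K_{1,\ell}(x)\,dx$, as $\mathcal{E}$ is of two-dimensional measure zero. For the left-hand side, the quantity $\Lc(T_{\ell};\Hc^{2}_{\varepsilon})$ is monotone non-decreasing as $\varepsilon$ decreases and converges almost surely to $\Lc(T_{\ell}) - 2\pi$ (the equator being a.s. removed exactly once); another application of MCT gives $\E[\Lc(T_{\ell};\Hc^{2}_{\varepsilon})]\uparrow \E[\Lc(T_{\ell})] - 2\pi$. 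Equating the two limits produces \eqref{cs}. This strategy directly mirrors \cite[Proposition 2.1]{CKW}, which I would invoke as a template.

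The main obstacle is the uniform non-degeneracy verification on each $\Hc^{2}_{\varepsilon}$, and ensuring the Kac-Rice integral is finite all the way down to the boundary. The latter is controlled by the short-range asymptotic \eqref{eq:nod bias hemi close to boundary}: it furnishes the bound $K_{1,\ell}(\psi) = O(\ell)$ even as $\psi\to 0$, so the integrand has no singular behaviour at the equator, making $\int_{\Hc^{2}}K_{1,\ell}(x)\,dx$ absolutely convergent. A minor subsidiary point is to check that only finitely many nodal arcs meet the equator a.s., so that the nodal length does not accumulate pathologically near $\mathcal{E}$; this is a consequence of the non-degeneracy of the auxiliary field $T_{\ell}/\cos\theta$ in a neighbourhood of the equator, together with standard zero-counting estimates.
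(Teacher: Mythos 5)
Your proof matches the paper's: both excise an $\varepsilon$-neighbourhood of the equator, apply the Kac--Rice formula on the complement (justified by non-degeneracy of the relevant Gaussian distribution there), and pass to the limit $\varepsilon\to 0$ via the Monotone Convergence Theorem on both sides, with the equator contributing the additive constant $2\pi$. The two subsidiary checks you flag at the end (a priori finiteness of the Kac--Rice integral, and finitely many nodal arcs meeting the equator) are not actually needed to close the argument, since Monotone Convergence holds without them and the identity would be valid even as an equality of extended reals.
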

\begin{proof}
One way justify the Kac-Rice formula outside the equator is by using \cite[Theorem 6.8]{AW}, that assumes the non-degeneracy of the $3\times 3$ covariance matrix at all these points, a condition we were able to verify via an explicit, though somewhat long,
computation, omitted here. Alternatively,
to validate the Kac-Rice formula it is sufficient \cite[Lemma 3.7]{KW}
that the Gaussian distribution of $T_{\ell}$ is non-degenerate for every $x \in \Hc^2\setminus \mathcal{E}$, which is easily satisfied.

We construct a small neighbour of the equator $\mathcal{E}$, i.e. the set
$$\mathcal{E}_{\varepsilon}=\left\{(\theta, \phi):\: \theta \in \left [\frac \pi 2 , \frac \pi 2-  \varepsilon \right ) \right\},$$
and we denote
$$ \Hc_{\varepsilon}= \Hc \setminus  \mathcal{E}_{\varepsilon}.$$
Since Kac-Rice formula holds for $T_{\ell}$ restricted to $\Hc_{\varepsilon}$, the expected nodal length for $T_{\ell}$ restricted to $\Hc_{\varepsilon}$ is
\begin{equation*}
\E[\Lc({T_{\ell}}|_{\Hc_{\varepsilon}})] = \int_{\Hc_{\varepsilon}} K_{1,\ell} (x) d x.
\end{equation*}
Since the restricted nodal length $\{ \Lc({T_{\ell}}|_{\Hc_{\varepsilon}}) \}_{\varepsilon>0}$ is an increasing sequence of nonnegative random variables with a.s. limit
$$\lim_{\varepsilon \to 0} \Lc({T_{\ell}}|_{\Hc_{\varepsilon}}) = \Lc({T_{\ell}}) - 2 \pi,$$
the Monotone Convergence Theorem yields
\begin{equation} \label{lim1}
\lim_{\varepsilon \to 0} \E[\Lc({T_{\ell}}|_{\Hc_{\varepsilon}})] = \E[\Lc({T_{\ell}}) ]- 2 \pi.
\end{equation}
Moreover, by the definition
\begin{equation} \label{lim2}
\lim_{\varepsilon \to 0}  \int_{\Hc_{\varepsilon}} K_{1,\ell} (x) d x =  \int_{\Hc} K_{1,\ell} (x) d x.
\end{equation}
The equality of the limits in \eqref{lim1} and \eqref{lim2} show that Proposition \ref{KKRR} holds.
\end{proof}

\subsection{Expected nodal length}

\begin{proof}[Proof of Corollary \ref{cor}]

To analyse asymptotic behaviour of the expected nodal length, we separate the contribution of the following three subregions of the hemisphere $\Hc$ in the Kac-Rice integral on the r.h.s of \eqref{cs}:
$$ \Hc_C= \{ (\psi, \phi): 0 < \psi < \epsilon_0\}, \hspace{0.5cm} \Hc_I= \{ (\psi, \phi): \epsilon_0 < \psi < C \}, \hspace{0.5cm} \Hc_F= \{ (\psi, \phi): C < \psi < \pi \ell\};$$
note that we express the three subregions of $\Hc$ in terms of the scaled variable $\psi$. In what follows we argue that $\Hc_{F}$ gives
the main contribution. \\

In the (scaled) spherical coordinates we may rewrite the Kac-Rice integral \eqref{cs} as
\begin{equation*}
\E[\Lc({T_{\ell}})]-2\pi=\frac{ \pi}{ \ell}  \int_{0}^{\ell \pi } K_{1,\ell}(\psi) \sin\left(\frac \pi 2 - \frac{\psi}{2 \ell} \right)   d \psi,
\end{equation*}
and the contribution of the third range $\Hc_{F}$ as
\begin{align}
\label{eq:E[] int far}
\E[\Lc({T_{\ell}}|_{\Hc_{F}})] & =   \frac{ \pi}{ \ell}  \int_{C}^{\ell \pi } K_{1,\ell}(\psi) \sin\left(\frac \pi 2 - \frac{\psi}{2 \ell} \right)   d \psi.
\end{align}
We are now going to invoke the asymptotics of $K_{1,\ell}(\psi)$, prescribed by \eqref{eq:nod bias hemi far from boundary} for this range.
The first term in \eqref{eq:nod bias hemi far from boundary} contributes
\begin{equation}
\label{eq:1st term contr}
\begin{split}
\frac{\pi}{ \ell}  \frac{\sqrt{\ell(\ell+1)}}{2 \sqrt 2} \int_{C}^{\ell \pi }  \sin\left(\frac \pi 2 - \frac{\psi}{2 \ell} \right)   d \psi &= \frac{\pi}{ \ell}  \frac{\sqrt{\ell(\ell+1)}}{2 \sqrt 2} 2 \ell \left[1-\sin\left( \frac{C}{2 \ell}\right) \right] \\
&=    2 \pi   \frac{\sqrt{\ell(\ell+1)}}{2 \sqrt 2}  \left[1- \frac{C}{2 \ell} + O\left( \frac{C}{\ell}\right) \right].
\end{split}
\end{equation}
to the integral \eqref{eq:E[] int far}.
The second term in \eqref{eq:nod bias hemi far from boundary} gives
\begin{equation}
\label{eq:2nd term cont}
\begin{split}
& \frac{\pi}{ \ell} \frac{\sqrt{\ell(\ell+1)}}{2 \sqrt 2} \int_{C}^{\ell \pi } \sqrt{\frac 2 \pi} \frac{1}{\sqrt \psi} \cos\{(\ell+1/2)\psi/\ell-\pi/4\}   \sin\left(\frac \pi 2 - \frac{\psi}{2 \ell} \right)   d \psi = O(\ell^{-1/2}),
\end{split}
\end{equation}
since, upon transforming the variables $w= \psi/\ell$, this term is bounded by
\begin{align*}
& \sqrt{\ell}  \int_{C/\ell}^{\pi}    \frac{1}{\sqrt{w} } \cos\{(\ell+1/2) w -\pi/4\}   d w \\
& = \frac{\sqrt{\ell}}{\sqrt 2}  \int_{C/\ell}^{\pi}  \frac{1}{\sqrt{w} } [ \cos\{(\ell+1/2) w\}  + \sin \{(\ell+1/2) w\} ]  d w \\
&=\frac{\sqrt{\ell}}{\sqrt 2}  \left\{  \left.  \frac{1}{\sqrt{w} } \frac{\sin((\ell+1/2) w)}{\ell+1/2} \right|_{C/\ell}^{\pi} + \frac{1}{2}  \int_{2 a_{\ell}/\ell}^{\pi}  w^{-3/2}  \frac{\sin((\ell+1/2) w)}{\ell+1/2} d w \right\} \\
&+ \frac{\sqrt{\ell}}{\sqrt 2}  \left\{  \left. -  \frac{1}{\sqrt{w} } \frac{\cos((\ell+1/2) w)}{\ell+1/2} \right|_{C/\ell}^{\pi} - \frac{1}{2}  \int_{2 a_{\ell}/\ell}^{\pi}  w^{-3/2}  \frac{\cos((\ell+1/2) w)}{\ell+1/2} d w \right\} \\
&= O(1/\sqrt{\ell}).
\end{align*}
The logarithmic bias is an outcome of
\begin{equation}
\label{eq:log bias}
\begin{split}
  \frac{\pi}{ \ell}  \frac{\sqrt{\ell(\ell+1)}}{2 \sqrt 2} \int_{C}^{\ell \pi } \left( -\frac{1}{16 \pi \psi}  \right)  \sin\left(\frac \pi 2 - \frac{\psi}{2 \ell} \right)   d \psi &= - \frac{1}{16  \ell}  \frac{\sqrt{\ell(\ell+1)}}{2 \sqrt 2} \left[ - \log \left(\frac{C}{2 \ell} \right) +O(1) \right]  \\
&=-  \frac{1}{16  \ell}  \frac{\sqrt{\ell(\ell+1)}}{2 \sqrt 2}  \log(\ell) +O( 1).
\end{split}
\end{equation}
Consolidating all of the above estimates \eqref{eq:1st term contr},
\eqref{eq:2nd term cont} and \eqref{eq:log bias}, and the contribution of the error term
in \eqref{eq:nod bias hemi far from boundary}, we finally obtain
\begin{align*}
\E[\Lc({T_{\ell}}|_{\Hc_{F}})] =   2 \pi   \frac{\sqrt{\ell(\ell+1)}}{2 \sqrt 2} -  \frac{1}{16  \ell}  \frac{\sqrt{\ell(\ell+1)}}{2 \sqrt 2}  \log(\ell) + O(1).
\end{align*}

\vspace{0.3cm}

The contribution to the Kac-Rice integral on the r.h.s of \eqref{cs} of the set $\Hc_{C}$ is bounded by the straightforward
\begin{align*}
\E[\Lc({T_{\ell}}|_{\Hc_{C}})] &= \frac{ \pi}{ \ell}  \int_{0}^{\varepsilon_0 } K_{1,\ell}(\psi) \sin\left(\frac \pi 2 - \frac{\psi}{2 \ell} \right)   d \psi=O(1),
\end{align*}
on recalling the uniform estimate \eqref{eq:nod bias hemi close to boundary}.
Finally, we may bound the contribution of the intermediate range $\Hc_I$ as follows. We first write
\begin{align*}
\E[\Lc({T_{\ell}}|_{\Hc_{I}})] &=\frac{1}{\sqrt{2\pi}}  \int_{\Hc_I}  \frac{1}{\sqrt{1-P_{\ell}(\cos(\frac{\psi}{\ell}))}}\cdot \E\left[\left \|\nabla T_{\ell}\left({\psi}/{\ell}\right) \right\|\big| T_{\ell}({\psi}/{\ell})=0\right] d \psi
\end{align*}
then we observe that on the intermediate range
$$\Hc_I=\left\{({\psi}/{\ell}, \phi): \varepsilon_0< \psi< C \right\},$$
the variance at the denominator, i.e. $1-P_{\ell}(\cos(\psi/\ell))$, is bounded away from $0$, and moreover the diagonal entries of the unconditional covariance matrix ${\bf C}_{\ell}$ of the Gaussian vector $\nabla T_{\ell}$ are $O(\ell^2)$, and so are the diagonal entries of the conditional matrix ${\bf \Omega}_\ell$, since they are bounded by the unconditional ones, as it follows directly from \eqref{eq:Omega transition}, or, alternatively, from the vastly general Gaussian Correlation Inequality ~\cite{Royen}.
This easily gives the following upper bound:
$$\mathbb{E}[\|\nabla T_{\ell}({\psi}/{\ell}) \| \big|  T_{\ell}({\psi}/{\ell})=0] \le \left(\mathbb{E}[\|\nabla T_{\ell}({\psi}/{\ell}) \|^2 \big| T_{\ell}({\psi}/{\ell})=0]\right)^{1/2} \le \left(\mathbb{E}[\| \nabla T_{\ell}({\psi}/{\ell}) \|^2 ] \right)^{1/2} =O(\ell) .$$
Since the area of $\Hc_C$ is $O(\ell^{-1})$, it follows that the total contribution this range to the expected nodal length is $O(1)$.

\end{proof}

\appendix

\section{Proof of Proposition \ref{16:52}}
We have that
\begin{align*}
\E[T_{\ell}(x)\cdot T_{\ell}(y)] &=\frac{8 \pi}{2 \ell+1} \sum\limits_{\substack{m=-\ell\\m\not\equiv\ell\mod{2}}}^{\ell} Y_{\ell,m} (x)  \;  \overline{Y}_{\ell, m} (y)\\
&=\frac 1 2 \frac{8 \pi}{2 \ell+1}\left[  \sum_{m=-\ell}^{\ell} Y_{\ell,m} (x)  \;  \overline{Y}_{\ell, m} (y)+\sum_{m=-\ell}^{\ell} (-1)^{m+\ell+1} Y_{\ell,m} (x)  \;  \overline{Y}_{\ell, m} (y) \right]\\
&=\frac 1 2 \frac{8 \pi}{2 \ell+1}\left[  \sum_{m=-\ell}^{\ell} Y_{\ell,m} (x)  \;  \overline{Y}_{\ell, m} (y) -\sum_{m=-\ell}^{\ell}  Y_{\ell,m} (\overline{x})  \;  \overline{Y}_{\ell,m} (y) \right],
\end{align*}
where we have used the fact that $Y_{\ell,m} (\theta, \phi)=(-1)^{\ell+m} Y_{\ell,m} (\pi- \theta, \phi)$.
We apply now the Addition Theorem for Spherical Harmonics:
\begin{align*}
P_{\ell}(\cos d(x,y))= \frac{4 \pi}{2 \ell+1} \sum_{m=-\ell}^{\ell} Y_{\ell,m} (x)  \;  \overline{Y}_{\ell,m} (y),
\end{align*}
so that
\begin{align*}
\E[T_{\ell}(x)\cdot T_{\ell}(y)] =P_{\ell}(\cos d(x,y)) - P_{\ell}(\cos d(\overline{x},y)).
\end{align*}
\begin{remark}
In particular, we note that,
\begin{align*}
\mathbb{E}[ T^2_{\ell}(x)] =  P_{\ell}( \langle x,x \rangle) - P_{\ell}(\langle \overline{x} ,x  \rangle) = 1- P_{\ell}(\cos(\pi-2 \theta)),
\end{align*}
this implies
\begin{align*}
\text{Var}(T_{\ell}(x))&=
\begin{cases}
1-P_{\ell}(\cos(\pi))= 1-(-1)^{\ell}&  \rm{if\;\;} \theta=0,\\
1- P_{\ell}(1)=0&  \rm{if\;\;} \theta=\pi/2,\\
\to 1 \text{ as } \ell \to \infty &  \rm{if\;\;} \theta\ne 0, \pi/2.
\end{cases}
\end{align*}
Moreover, as $\ell \to \infty$, for $\theta\ne 0, \pi/2$,
\begin{align*}
\frac{\E[T_{\ell}(x)\cdot T_{\ell}(y)] }{P_{\ell}(\cos d(x,y))} \to 1.
\end{align*}
\end{remark}

\end{document}